\newcolumntype{H}{>{\setbox0=\hbox\bgroup}c<{\egroup}@{}}
\newcommand{\rmBtr}{\color{green!65!black}}
\newcommand{\rmWrs}{\color{red}}
\newcommand{\rmEql}{}
\newcommand{\rmUkn}{}
\DeclareMathOperator{\dist}{dist}
\DeclareMathOperator{\OPT}{OPT}
\newcommand{\Rset}{ {\mathbb{R}} }
\newtheorem{theorem}{Theorem}
\newtheorem{lemma}[theorem]{Lemma}
\def\BibTeX{{\rm B\kern-.05em{\sc i\kern-.025em b}\kern-.08em
    T\kern-.1667em\lower.7ex\hbox{E}\kern-.125emX}}
\begin{document}

\title{Cost-Distance Steiner Trees\\ for Timing-Constrained Global Routing}

\author{\IEEEauthorblockN{Stephan Held and Edgar Perner}\\
\IEEEauthorblockA{\textit{Research Institute for Discrete Mathematics and Hausdorff Center for Mathematics} \\
\textit{University of Bonn}\\
Bonn, Germany \\
\href{mailto:held@dm.uni-bonn.de}{held@dm.uni-bonn.de} and \href{mailto:perner@dm.uni-bonn.de}{perner@dm.uni-bonn.de}
%\url{https://orcid.org/0000-0003-2188-1559} and \url{https://orcid.org/0009-0001-4371-0585}
}
}

\maketitle
\thispagestyle{plain}
\pagestyle{plain}

\begin{abstract}
The cost-distance Steiner tree problem seeks a Steiner tree that minimizes the total congestion cost plus the weighted sum of source-sink delays.
This problem arises as a subroutine in timing-constrained global routing with a linear delay model, used before buffer insertion.
Here, the congestion cost and the delay of an edge are essentially uncorrelated, unlike in most other algorithms for timing-driven Steiner trees.

We present a fast algorithm for the cost-distance Steiner tree problem.
Its running time is $\mathcal{O}(t (n\log n + m))$, where $t$, $n$, and $m$ are the numbers of terminals, vertices, and edges in the global routing graph.
We also prove that our algorithm guarantees an approximation factor of $\mathcal{O}(\log t)$.
This matches the best-known approximation factor for this problem, but with a much faster running time.

To account for increased capacitance and delays after buffering caused by bifurcations, we incorporate a delay penalty for each bifurcation without compromising the running time or approximation factor.

In our experimental results, we show that our algorithm outperforms previous methods that first compute a Steiner topology, e.g. based on shallow-light Steiner trees or the Prim-Dijkstra algorithm, and then embed this into the global routing graph.
\end{abstract}

\begin{IEEEkeywords}
Cost-distance Steiner tree, timing-constrained global routing
\end{IEEEkeywords}

\section{Introduction}

Modern routers do not only aim for minimum total wire length, but need
to consider signal timing as well \cite{HuSapatnekar:02, Albrecht-etal:03, Yan-etal:06,Held-etal:18,Yang=etal:23,GlobalInterconnectOpt:23,Yang-etal:24}. Before  and in preparation of buffer insertion, this is
usually estimated using a linear delay model, where the delay depends
linearly on the wire length, as well as on wire type and layer \cite{Bartoschek-etal:10,CATALYST:13}.
There are many Steiner tree variants that take path lengths into
account \cite{PrimDijkstra:95,Bartoschek-etal:10,rotter-held:13,PrimDijkstra:18,SALT:19}.
They do not consider independent congestion and delay costs.

The cost-distance Steiner tree problem overcomes this limitation.
It asks for a Steiner tree minimizing the  sum of routing congestion cost
and weighted path delays through the tree.
It occurs as the central  sub-problem in timing-constrained global routing,
where it is usually  solved heuristically \cite{Held-etal:18}.

We are given a graph $G$ with edge cost $c: E(G) \to \Rset_+$ and edge
delays $d: E(G) \to \Rset_+$. In our application
$G$ is a 3D global routing graph, where an edge cost $c(e)$ arises from the current edge
usage and $d(e)$ is the delay of $e$ in a linear delay model.  If
multiple wire types (width and spacing configurations) are available
$G$ may have a parallel edge for each wire type that has an individual
cost and delay.

Given a net connecting a set of sinks $S$ to a root $r$, delay weights $w:\to \Rset_+$ and
positions $\pi: S\cup\{r\}\to V(G)$, we seek  for a Steiner tree $T$ connecting $S\cup\{r\}$ that is embedded into
$G$, i.e.\ has an extension  $\pi:V(T)\setminus (S\cup\{r\}) \to V(G)$
and minimizes

\begin{align}
  \label{eqn:cost-distance-problem}
  \text{cost}(T)=\sum _{e\in T} c(e) +\sum _{t \in S} w(t) \mathrm{delay}_T(r,t),
\end{align}

where $\mathrm{delay}_T(r,t) = \sum_{e\in E(T_{[r,t]})} d(e)$ in a simple linear delay model.
The delay weights indicate the criticality of a sink and arise from a
Lagrangean relaxation in timing-constrained global routing \cite{Held-etal:18}.

Throughout this paper, we use $t\coloneqq |S\cup\{r\}|, n\coloneqq |V(G)|$ and $m\coloneqq |E(G)|$.
The best known approximation factor for this problem is $\mathcal{O}(\log t)$ 
based on iteratively computing a matching of terminals that are served by a Steiner vertex based on a special distance function \cite{meyerson-etal:08,Chekuri-et-al01} with a running time of $\mathcal{O}(t^2(n\log n + m))$.
It does not permit an approximation factor better than $\Omega(\log \log t)$ unless $\text{NP} \subseteq\text{DTIME}(t^{\mathcal{O}(\log\log\log t)})$ \cite{Chuzhoy-etal:08}.
For the uniform case, where $c=d$, constant approximation factors are possible \cite{meyerson-etal:08,UniformCostDist:23}.

Bifurcations on a path increase the capacitance and delay after buffering.
In Figure \ref{fig:bifurcations}, we see a critical path with many bifurcations on the left, which
is unfavorable to the tree on the right after buffering.  To mitigate this,
we use a delay penalty for each bifurcation, as proposed in \cite{Bartoschek-etal:10}.
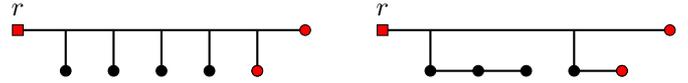
\begin{figure}[tb]
\begin{tikzpicture}[scale = 0.9, xscale = 0.7]
	\draw[thick] (0,0) -- (6,0);

	\foreach \x in {1,2,3,4,5} {
	\node[draw,circle,fill=black,minimum size=4pt,inner sep=0pt] at (\x,-0.6) {};  
	\draw[thick] (\x,0) -- (\x,-0.6) node[below] {};
	}

	\node[draw,rectangle,fill=red,minimum size=4pt,inner sep=0pt,label=above:$r$] at (0,0) {};
	\node[draw,circle,fill=red,minimum size=4pt,inner sep=0pt] at (6,0) {};
	\node[draw,circle,fill=red,minimum size=4pt,inner sep=0pt] at (5,-0.6) {};
	
\end{tikzpicture} \hfill
\begin{tikzpicture}[scale = 0.9, xscale = 0.7]
	\draw[thick] (0,0) -- (6,0);

        \foreach \x in {1,2,3,4,5} {
	\node[draw,circle,fill=black,minimum size=4pt,inner sep=0pt] at (\x,-0.6) {};  
	}
	\draw[thick] (1,0) -- (1,-0.6) node[below] {};
	\draw[thick] (4,0) -- (4,-0.6) node[below] {};
	\draw[thick] (3,-0.6) -- (1,-0.6) node[below] {};
	\draw[thick] (5,-0.6) -- (4,-0.6) node[below] {};

	\node[draw,rectangle,fill=red,minimum size=4pt,inner sep=0pt,label=above:$r$] at (0,0) {};
	\node[draw,circle,fill=red,minimum size=4pt,inner sep=0pt] at (6,0) {};
	\node[draw,circle,fill=red,minimum size=4pt,inner sep=0pt] at (5,-0.6) {};
	
\end{tikzpicture}
\caption{Two trees for the same net, the right one with fewer bifurcations on the path from the root $r$ to critical sinks (red dots).}
\label{fig:bifurcations}
\end{figure}

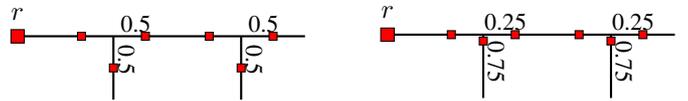
\begin{figure}[tb]
\begin{tikzpicture}[scale = 0.42]
	\draw[thick] (2,0) -- (11,0);
	\node[draw,rectangle,fill=red,minimum size=3pt,inner sep=0pt] at (4,0) {};
	\node[draw,rectangle,fill=red,minimum size=3pt,inner sep=0pt] at (6,0) {};
	\node[draw,rectangle,fill=red,minimum size=3pt,inner sep=0pt] at (8,0) {};
	\node[draw,rectangle,fill=red,minimum size=3pt,inner sep=0pt] at (10,0) {};
	\node[draw,rectangle,fill=red,minimum size=5pt,inner sep=0pt,label=above:$r$] at (2,0) {};
	\draw[thick] (5,0) -- (5, -2) {};
	\node[draw,rectangle,fill=red,minimum size=3pt,inner sep=0pt] at (5,-1) {};
	\draw[thick] (9,0) -- (9, -2) {};
	\node[draw,rectangle,fill=red,minimum size=3pt,inner sep=0pt] at (9,-1) {};

	\node at (5.7,0.4) {\small0.5};
	\node at (9.7,0.4) {\small0.5};
   \node[label={[rotate=-90]left:\small 0.5}] at (5.7,-1.5) {};
   \node[label={[rotate=-90]left:\small0.5}] at (9.7,-1.5) {};
\end{tikzpicture} \hfill
\begin{tikzpicture}[scale = 0.42]
	\draw[thick] (2,0) -- (11,0);
	\node[draw,rectangle,fill=red,minimum size=3pt,inner sep=0pt] at (4,0) {};
	\node[draw,rectangle,fill=red,minimum size=3pt,inner sep=0pt] at (6,0) {};
	\node[draw,rectangle,fill=red,minimum size=3pt,inner sep=0pt] at (8,0) {};
	\node[draw,rectangle,fill=red,minimum size=3pt,inner sep=0pt] at (10,0) {};
	\node[draw,rectangle,fill=red,minimum size=5pt,inner sep=0pt,label=above:$r$] at (2,0) {};
\draw[thick] (5,0) -- (5, -2) {};
	\node[draw,rectangle,fill=red,minimum size=3pt,inner sep=0pt] at (5,-0.2) {};
	\draw[thick] (9,0) -- (9, -2) {};
	\node[draw,rectangle,fill=red,minimum size=3pt,inner sep=0pt] at (9,-0.2) {};

					 \node at (5.7,0.4) {\small0.25};
					 \node at (9.7,0.4) {\small0.25};
					 \node[label={[rotate=-90]left:\small0.75}] at (5.7,-1.8) {};
					 \node[label={[rotate=-90]left:\small0.75}] at (9.7,-1.8) {};

\end{tikzpicture}
\caption{Two  buffering solutions (small red squares) indicating  different delay tradeoffs between at the branchings.}
\label{fig:distribute}
\end{figure}
As buffering has some freedom to shield off capacitance (see Figure~\ref{fig:distribute}), we use a model where
the bifurcation penalty can be distributed within certain limits. 
For technical reasons, we assume that the root and the sinks form the leaves of $T$ and internal vertices have degree at most 3. We say that  such a tree is \textit{bifurcation compatible}.
As we allow multiple vertices with the
same position, any Steiner tree can be transformed into such a tree without changing the total length
or any source-sink length in the tree.

We can consider  $T$ as an $r$-arborescence.
Then, at every proper Steiner vertex $u$ with $\delta^+_T(u) = \{x,y\}$ we distribute a delay
penalty to $(u,x)$ and $(u,y)$ as follows.  Let $d_{bif}\in \Rset_+$
be the total bifurcation penalty for both branches and $0 \le \eta \le 1/2$.
For $\lambda_x\in [\eta, 1-\eta]$, we assign a delay penalty
  of $\lambda_x d_{bif}$ to $(u,x)$ and $\lambda_yd_{bif}$ to
  $(u,y)$ with $\lambda_y = (1-\lambda_x)$.

 In the cost-distance
  objective with weighted delays, the optimum choice of $\lambda_x$  depends only on the
  total delay weights in the two subtrees:
  \begin{equation}
        \lambda_x =
  \begin{cases}
    \eta &    \text{if } w(S\cap V(T_x)) > w(S\cap V(T_y)) \\
    0.5 & \text{if } w(S\cap V(T_x)) = w(S\cap V(T_y)) \\
    (1-\eta) &  \text{else}.
  \end{cases}
  \label{eqn:lambda}
  \end{equation}
  If $w(S\cap V(T_x)) = w(S\cap V(T_y))$ any choice is optimal.
$T_x $ is the sub-arborescence rooted at $x$ and $w(X)=\sum_{x\in X}w(x)$.
  In \cite{Bartoschek-etal:10}, the special case $\eta = 0.5$ was considered.
  Minimizing the maximum delay would lead to different choices of $\lambda_x$.

  Setting $\lambda_x$  according to  (\ref{eqn:lambda}) for every edge $(u,x)$ with $|\delta^+(u)| = 2$
  and $\lambda_x = 0$ elsewhere, we have the following delay model, which we assume for (\ref{eqn:cost-distance-problem}) from now on:
\begin{equation}
    \mathrm{delay}_T(r,t) = \sum_{e=(u,v)\in E(T_{[r,t]})} \left(d(e) + \lambda_v d_{bif}\right).
    \label{eqn:bif-delay-formula}
\end{equation}

  The parameter $d_{bif}$ is computed similarly to
  \cite{Bartoschek-etal:10} by taking an optimally spaced uniform repeater
  chain.  Then, $d_{bif}$ is the delay increase when adding the input
  capacitance in the middle of a single net, minimizing over all layers
  and wire types.

\subsection{Our Contributions}
Our contributions are the following:
\begin{enumerate}
\item A new randomized  $\mathcal{O}(\log t)$-approximation algorithm for cost-distance Steiner trees with bifurcation penalties (Section~\ref{sec:approx-algorithm}). Its running time  of  $\mathcal{O}(t(n\log n + m))$ improves by a factor of $t$ over \cite{meyerson-etal:08}.
\item Multiple enhancements to improve its practical performance (Section~\ref{sec:pracical-enhancements}).
\item Experimental comparisons with the state-of the art heuristics, demonstrating a significant benefit
  of our method in timing-constrained global routing  (Section~\ref{sec:experimental-results}).
\end{enumerate}

\section{A Fast Approximation Algorithm}
 \label{sec:approx-algorithm}
Our algorithm (Algorithm~\ref{greedycd}) works similar to Kruskal's algorithm for spanning trees \cite{Kruskal:56}. It  builds a Steiner tree by iteratively merging components.
In each iteration $i$,
we propagate distance labels from all sinks simultaneously but with a special sink-specific distance function (see Theorem~\ref{thm:running-time}) until a permanent label reaches another terminal.
This determines the vertices $u$ and $v$ in Line~\ref{algline:dijkstra},
which are joined through a new Steiner point $s$ that replaces $u$ and $v$ in the terminal set $S_{i+1}$ of the next iteration and has the sum of their delay weights as its own weight.
Note that the component containing the root vertex $r_i$ might also be merged.
In this case, $s$ is  inserted at the position $\pi(r_i)$ and serves as the root $r_{i+1}$ in the next iteration (line~\ref{algline:rootmerge}).
If two sinks are merged, we select the position $\pi(s)$ randomly proportional to their delay weight (line~\ref{algline:sinkmerge}).
In the end, the inserted Steiner vertices are the  bifurcations of the tree. 

A key difference to \cite{Kruskal:56} is the cost function. In  \cite{meyerson-etal:08},
the main algorithm uses for each pair $u,v$ of sinks an individual distance function $c(e) + 2w(u)w(v)/(w(u)+w(v))\cdot d(e)$ ($e\in E(G)$). It combines costs and weighted delays.
Computing shortest paths between each pair of terminals with an individual cost function would be prohibitively slow in a large global routing graph.
We use one distance function
\begin{equation}
  l_u(e)\coloneqq c(e) + w(u) \cdot d(e) \quad (e\in E(G))
\end{equation}
per sink $u$ as in \cite{Chekuri-et-al01}.
While the algorithms in \cite{meyerson-etal:08,Chekuri-et-al01} merge iteratively based on a matching involving (a constant fraction of) all terminals,
our algorithm allows the newly inserted Steiner vertex to be merged in the next iteration, before any other merges take place.

We address the bifurcation delay penalties as follows.
Let 
$$\beta(w,w') = d_{bif}  \left(\eta \max(w,w') + (1 - \eta)\min(w, w') \right)$$
denote the minimum possible weighted delay penalty when merging two sinks with delay weights $w$ and $w'$. 
Then, the weighted bifurcation delay penalty for  merging  $u \in S_i$ and $v \in S_i\cup\{r_i\}$ is
$$
b(u,v) \coloneqq  
\begin{cases}
  \beta(w(u), w(v))                   & (v \in S_i)\\
  \beta(w(u), w(S_i \setminus \{u\})) & (v = r_i)
\end{cases}.
$$
In line \ref{algline:dijkstra} of Algorithm~\ref{greedycd}, the pair $u,v$ minimizing

\begin{align}
  L(u,v) \!\!\coloneqq \!\! \begin{cases}
    \dist_{G,c+ \min\{w(u),w(v)\} \cdot d}(u,v) + b(u,v)  &  \!\!(v\in S_i)\\
    \dist_{G,c+ w(u)\cdot d}(u,r_i) + b(u,r_i)  &  \!\!(v = r_i)
  \end{cases}
  \label{eqn:luv}
\end{align}

is chosen.
By $\dist_{\Gamma,\gamma}(x,y)$ we denote the shortest path distance between $x$ and $y$ in the graph $\Gamma$ w.r.t. edge lengths $\gamma$.
  $L(r_i,v)\coloneqq L(v,r_i)$ completes the definition of $L$.
  We can still prove an approximation factor of $\mathcal{O}(\log t)$.
In addition, our algorithm is  substantially faster than  \cite{meyerson-etal:08,Chekuri-et-al01}.

\begin{algorithm}[tb]
  	\caption{Cost-Distance Algorithm}\label{greedycd}
	\DontPrintSemicolon
	\KwIn{An instance $(G,S,r,w,c,d,d_{bif},\eta)$ of the Cost-Distance Steiner tree problem}
	\KwOut{An embedded Steiner tree $(T,\pi)$ on $ S $ with root $ r $.}
	$T \leftarrow (S \cup \{r\}, \emptyset)$, $i \leftarrow 0$, $S_0 \leftarrow S$, $r_0 \leftarrow r$.\;
	\While{$|S_i|>0$}{
	  Find the nodes $u \in S_i $, $v \in S_i \cup \{r_i\} $ minimizing $L(u,v)$ (see (\ref{eqn:luv})\& Theorem~\ref{thm:running-time}).\label{algline:dijkstra}\;
	\uIf{$v = r_i$} {
          Add a new Steiner vertex $s$ placed at  $\pi(s)= \pi(r)$, $r_{i+1}  \leftarrow s$, $w(s) \leftarrow w(u)$,
          $S_{i+1} \leftarrow S_{i} \setminus\{u\}$.\label{algline:rootmerge}
        } \Else {
          Randomly select a position $p \in \{\pi(u), \pi(v)\}$  with probabilities proportional to their delay weight.
          Add a new Steiner vertex $s$ at $\pi(s) = p$  with $w(s) \leftarrow w(u)+w(v)$,
          $S_{i+1} \leftarrow S_{i} \cup \{s\} \setminus\{u,v\}$.\label{algline:sinkmerge}
        }
	Add the $u$-$s$-$v$ connection corresponding to the path underlying $L(u,v)$ to $T$.\;
        $i \leftarrow i+1$.\;
	}
	Return the embedded tree $(T,\pi)$.\;
\end{algorithm}
\begin{figure}
	\begin{tikzpicture}[scale = 0.5485]
		\useasboundingbox (-1,-1) rectangle (7,5);
		\clip (-1,-1) rectangle (7,5);
		\draw[thick] (-1,-1) rectangle (7,5);
	
	\coordinate (P1) at (1, 0);
	\coordinate (P2) at (2, 2);
	\coordinate (P3) at (0, 2);
	\coordinate (P4) at (4, 0);
	\coordinate (P5) at (3, 4);
	\coordinate (P6) at (6, 0.1);
	
	\foreach \i/\r in {1/0.5,2/1,3/1.5,4/2,5/2.5} {
	  
	  \fill[black] (P\i) circle ({0.15/\r});
	}
	\fill[red] (P6) rectangle ++(0.2, -0.2) {};
	\node at (6.15,0.3) {$r$};
	\end{tikzpicture}
\begin{tikzpicture}[scale = 0.5485]
	\useasboundingbox (-1,-1) rectangle (7,5);
	\clip (-1,-1) rectangle (7,5);
	\draw[thick] (-1,-1) rectangle (7,5);

\coordinate (P1) at (1, 0);
\coordinate (P2) at (2, 2);
\coordinate (P3) at (0, 2);
\coordinate (P4) at (4, 0);
\coordinate (P5) at (3, 4);
\coordinate (P6) at (6, 0.1);
\coordinate (S) at (3, 2);
\draw[red] (P5) -- (S);
\draw[red] (P2) -- (S);

\foreach \i/\r in {1/0.5,2/1,3/1.5,4/2,5/2.5} {
  
  \fill[blue] (P\i) circle ({0.15/\r});
  \draw[blue] (P\i) circle ({0.9*\r});
}
\fill[red] (P6) rectangle ++(0.2, -0.2);
\node at (0,4.3) {\small $i=0$};
\node at (3.5,4) {\small $u$};
\node at (2,1.5) {\small $v$};
\end{tikzpicture}
\begin{tikzpicture}[scale = 0.5485]
	\useasboundingbox (-1,-1) rectangle (7,5);
	\clip (-1,-1) rectangle (7,5);
	\draw[thick] (-1,-1) rectangle (7,5);

\coordinate (P1) at (1, 0);
\coordinate (P2) at (2, 2);
\coordinate (P3) at (0, 2);
\coordinate (P4) at (4, 0);
\coordinate (P5) at (3, 4);
\coordinate (P6) at (6, 0.1);
\coordinate (S) at (3, 2);
\coordinate (S2) at (6, 0);

\draw[red] (P5) -- (S);
\draw[red] (P2) -- (S);
\draw[red] (P4) -- (S2);

\foreach \i/\r in {1/0.5,2/1,3/1.5,4/2,5/2.5} {
  
  \fill[gray] (P\i) circle ({0.15/\r});
}
\foreach \i/\r in {1/0.5,3/1.5,4/2,2/0.7} {
  
  \fill[blue] (P\i) circle ({0.15/\r});
  \draw[blue] (P\i) circle ({1.05*\r});
}
\fill[red] (P6) rectangle ++(0.2, -0.2);
\node at (0,4.3) {\small $i=1$};
\node at (2,1.45) {$s$};
\end{tikzpicture}
\begin{tikzpicture}[scale = 0.5485]
	\useasboundingbox (-1,-1) rectangle (7,5);
	\clip (-1,-1) rectangle (7,5);
	\draw[thick] (-1,-1) rectangle (7,5);

\coordinate (P1) at (1, 0);
\coordinate (P2) at (2, 2);
\coordinate (P3) at (0, 2);
\coordinate (P4) at (4, 0);
\coordinate (P5) at (3, 4);
\coordinate (P6) at (6, 0.1);
\coordinate (S2) at (6, 0);

\coordinate (S) at (3, 2);
\draw[red] (P5) -- (S);
\draw[red] (P2) -- (S);
\draw[red] (P4) -- (S2);
\draw[red] (P3) -- (P2);

\foreach \i/\r in {1/0.5,2/1,3/1.5,4/2,5/2.5} {
  
  \fill[gray] (P\i) circle ({0.15/\r});
}
\foreach \i/\r in {1/0.5,3/1.5,2/0.7} {
  
  \fill[blue] (P\i) circle ({0.15/\r});
  \draw[blue] (P\i) circle ({1.33*\r});
}
\fill[red] (P6) rectangle ++(0.2, -0.2);
\node at (0,4.3) {\small $i=2$};

\end{tikzpicture}
\begin{tikzpicture}[scale = 0.5485]
	\useasboundingbox (-1,-1) rectangle (7,5);
	\clip (-1,-1) rectangle (7,5);
	\draw[thick] (-1,-1) rectangle (7,5);

\coordinate (P1) at (1, 0);
\coordinate (P2) at (2, 2);
\coordinate (P3) at (0, 2);
\coordinate (P4) at (4, 0);
\coordinate (P5) at (3, 4);
\coordinate (P6) at (6, 0.1);
\coordinate (S4) at (2, 0);

\coordinate (S) at (3, 2);
\draw[red] (P5) -- (S);
\draw[red] (P2) -- (S);
\draw[red] (P4) -- (S2);
\draw[red] (P3) -- (P2);
\draw[red] (S4) -- (P2);
\draw[red] (S4) -- (P1);

\foreach \i/\r in {1/0.5,2/1,3/1.5,4/2,5/2.5} {
  
  \fill[gray] (P\i) circle ({0.15/\r});
}
\foreach \i/\r in {1/0.5,2/0.6} {
  
  \fill[blue] (P\i) circle ({0.15/\r});
  \draw[blue] (P\i) circle ({3.7*\r});
}
\fill[red] (P6) rectangle ++(0.2, -0.2);
\node at (0,4.3) {\small $i=3$};

\end{tikzpicture}
\begin{tikzpicture}[scale = 0.5485]
	\useasboundingbox (-1,-1) rectangle (7,5);
	\clip (-1,-1) rectangle (7,5);
	\draw[thick] (-1,-1) rectangle (7,5);
\coordinate (P1) at (1, 0);
\coordinate (P2) at (2, 2);
\coordinate (P3) at (0, 2);
\coordinate (P4) at (4, 0);
\coordinate (P5) at (3, 4);
\coordinate (P6) at (6, 0.1);
\coordinate (S4) at (2, 0);

\coordinate (S) at (3, 2);
\draw[red] (P5) -- (S);
\draw[red] (P2) -- (S);
\draw[red] (P4) -- (S2);
\draw[red] (P3) -- (P2);
\draw[red] (S4) -- (P2);
\draw[red] (S4) -- (P1);
\draw[red] (P1) -- (S2);
\foreach \i/\r in {1/0.5,2/1,3/1.5,4/2,5/2.5} {
  
  \fill[gray] (P\i) circle ({0.15/\r});
}
\foreach \i/\r in {1/0.4} {
  
  \fill[blue] (P\i) circle ({0.15/\r});
  \draw[blue] (P\i) circle (5.1);
}

\fill[red] (P6) rectangle ++(0.2, -0.2);
\node at (0,4.3) {\small $i=4$};

\end{tikzpicture}
\caption{Cost-distance algorithm example. Root in red, the sinks $S$ in black.
For simplicity the labeled vertices in global routing graph are visualized as  Euclidean balls.}
\label{fig:CD}
\end{figure}
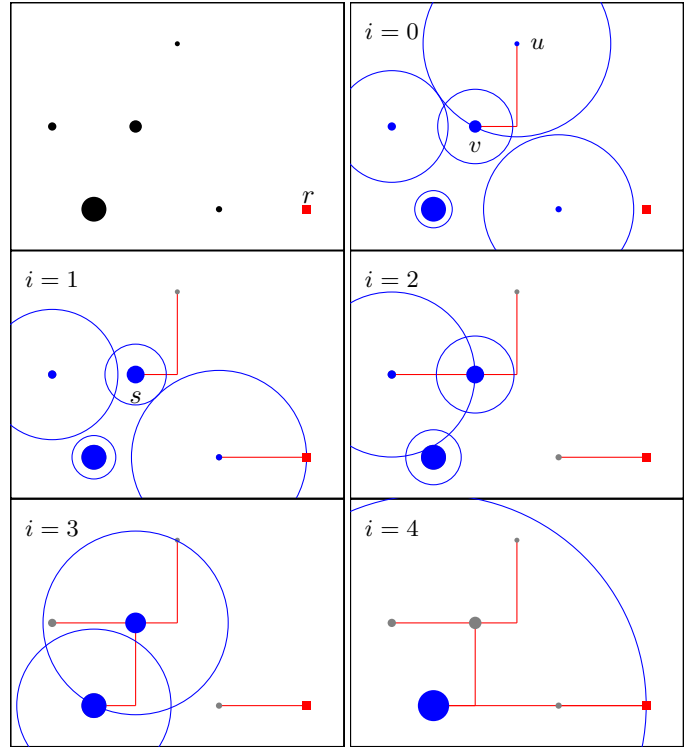

\subsection{An Example}
Figure \ref{fig:CD} visualizes the course  of the algorithm for a set of $5$ sinks, the black dots in the first picture.
The size of the dots represents their delay weight. We call the elements of $S_i$ in iteration $i$ the \textit{active} vertices (marked blue).
In iteration $i=0$ of the algorithm, $S_0 = S$, thus all sinks are active, and we start Dijkstra searches (areas inside blue circles) from all sinks using their individual cost functions. The larger the delay weight, the slower the circle grows.
The  terminal $u$ is the first one to find a connection (to $v$).

One of $\pi(u)$ and $\pi(v)$ is now chosen randomly with probability  proportional to its delay weight to place the Steiner vertex $s$ with weight $w(s) \coloneqq w(u)+w(v)$.
The random choice makes it more likely for the heavier weighted terminal to be reachable from $r$ without a detour.
Here, the more likely vertex $v$ was chosen.

The shortest $u$-$s$-$v$ path regarding $l_u$ (red) is added to $T$.
The terminals $u$ and $v$ are deactivated and $s$ is activated.
In the next iteration $i=1$, we started a new Dijkstra search from $s$ w.r.t. $l_s$.
In this iteration the shortest connection is a root connection, so we do not make a random choice for our Steiner vertex and just deactivate the sink.
We continue connecting vertices and decreasing the number of active vertices until all sinks are connected.

\subsection{Theoretical Properties}
We first analyze the running time of the algorithm.
\begin{theorem}
  \label{thm:running-time}
  The cost-distance algorithm can be implemented with a running time of $\mathcal{O}(t(n\log n + m))$.
\end{theorem}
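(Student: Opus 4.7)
The plan is to bound the number of iterations and the per-iteration work, with the central idea being reuse of shortest-path computations across iterations.

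First, I would show that the loop runs at most $t-1$ times: every iteration strictly decreases $|S_i|$, since a sink-sink merge replaces two elements of $S_i$ by one, and a sink-root merge removes one sink while placing the new Steiner vertex into the root slot rather than into $S_{i+1}$.

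Next I would exploit the fact that each active sink $u$ has a fixed length function $l_u(e) = c(e) + w(u)\, d(e)$. The weight $w(u)$ is assigned when $u$ enters $S_i$ and does not change while $u$ is active, so a single Dijkstra from $\pi(u)$ with respect to $l_u$ produces a table $d_u(x) \coloneqq \dist_{G, l_u}(\pi(u), x)$ that remains valid until $u$ is removed. I would therefore compute such a table up front for every $u \in S_0$ and, in every later iteration that performs a sink-sink merge, run one additional Dijkstra from the new Steiner vertex $s$ with $l_s$. No Dijkstra is needed from the root, because in (\ref{eqn:luv}) the distance to $r_i$ is measured with the sink's own length function $l_u$ rather than any length function attached to the root. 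In total this yields at most $(t-1) + (t-2) = \mathcal{O}(t)$ Dijkstra calls, each running in $\mathcal{O}(n \log n + m)$ with a Fibonacci heap, summing to $\mathcal{O}(t(n \log n + m))$.

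Given the stored $d_u$, each value $L(u,v)$ is evaluated in constant time using (\ref{eqn:luv}): choose the endpoint with the smaller delay weight, look up the corresponding $d_u(\pi(v))$, and add the closed-form bifurcation penalty $b(u,v)$. I would maintain the minimum over all candidate pairs in a global priority queue, in which every terminal is inserted and lazily deleted at most $\mathcal{O}(t)$ times, contributing $\mathcal{O}(t^2 \log t)$ bookkeeping in total. Since $t \le n$, this is dominated by the Dijkstra budget.

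The main obstacle I anticipate is arguing cleanly that, when moving from iteration $i$ to $i+1$, only the newly inserted Steiner vertex genuinely requires a fresh Dijkstra. This relies on three invariants maintained by Algorithm~\ref{greedycd}: the root's position stays at $\pi(r)$ throughout (line~\ref{algline:rootmerge}), every surviving sink retains both its position and its weight, and in a sink-sink merge the new position $\pi(s) \in \{\pi(u), \pi(v)\}$ is an already-embedded vertex of $G$ from which the new Dijkstra can be launched directly with $d_s(\pi(s)) = 0$.
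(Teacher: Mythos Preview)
Your argument is correct and rests on the same key observation as the paper: because a surviving sink keeps both its position and its weight, its shortest-path labels under $l_u$ remain valid across iterations, so only the freshly created Steiner vertex ever requires a new search, giving $\mathcal{O}(t)$ Dijkstra runs in total. The paper realizes this somewhat differently. Instead of running each Dijkstra to completion and then selecting the minimizing pair from precomputed tables via an external priority queue, it interleaves all sink searches in one \emph{simultaneous} Dijkstra: whenever the search from $u$ reaches another terminal $v$ it adds $b(u,v)$ as an entry cost, and the first terminal to receive a permanent label from some other search yields the pair minimizing $L$; after the merge, the surviving searches simply resume from their existing labels while a fresh search is launched from the new Steiner vertex. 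Both routes give the same $2|S|-1$ searches and the same worst-case bound. The paper's packaging buys early termination in practice and folds the pair selection into the Dijkstra itself, avoiding your separate $\mathcal{O}(t^2\log t)$ bookkeeping layer; your version, by decoupling the shortest-path computation from the pair selection, makes the invariants you list (fixed root position, unchanged weights of survivors) more transparent and sidesteps the subtlety that $b(u,r_i)$ shifts between iterations.
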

\begin{proof}
  The total running time is dominated by the shortest path computations in Step~\ref{algline:dijkstra}.
  We start a Dijkstra search from each sink $u \in S_i$ simultaneously using the individual distance function $l_u$.
  
  Whenever we enter a vertex $v \in S_i \cup \{r_i\} \setminus \{u\}$, we add  the optimally balanced weighted node delay $b(u,v)$.
  If such a vertex gets permanently labeled, we have found a pair $u,v$  minimizing $L(u,v)$.
  After adding the connection, we start a Dijkstra search from the newly inserted Steiner vertex $s$.
  Dijkstra labels from other terminals remain valid in the next iteration.
  As we insert $|S|-1$ Steiner vertices, we have  $2|S|-1$ path searches, implementing shortest path searches with Fibonacci heaps
  \cite{DijkstraWithFibonacci:87} yields the claimed running time.
\end{proof}

We now prove the approximation guarantee of our algorithm.
Throughout, let  $\OPT$ denote the cost  according to (\ref{eqn:cost-distance-problem})  and (\ref{eqn:bif-delay-formula}) 
of an optimum solution $T^*$ to the original instance.
The challenge lies in bounding the incremental cost increase from each iteration.
Similar to \cite{meyerson-etal:08}, we show that there is a cheap matching of terminals w.r.t. weights $L$ (Lemma~\ref{lem:path-decomposition-weight}) which covers roughly the price increase of 
$\frac{1}{4}$ of the iterations (Theorem~\ref{thm:approximation-alg}).

In iteration $i$, consider the potential $$D_i \coloneqq \sum_{s\in S_i} w(s)\mathrm{delay}_{T^*}(r,s).$$
As new Steiner vertices  $s\in S_i$ are mapped to a position $\pi(s)= \pi(v')$, $ v' \in S\cup\{r\}$,
 $\mathrm{delay}_{T^*}(r,s) \coloneqq  \mathrm{delay}_{T^*}(r,v')$ is well-defined for all $s\in S_i$.

\begin{lemma}\label{expected_cost}
  In each iteration $i$, there is a cost-distance tree for the terminal set $S_i\cup \{r_i\}$
  with expected total cost at most $$C(T^*) +\mathbb{E}[D_i] \le \OPT.$$
\end{lemma}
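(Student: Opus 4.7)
My plan is to exhibit $T^*$ itself, reinterpreted as a Steiner tree on the reduced terminal set $S_i \cup \{r_i\}$, as the witness for the bound, and then to bound the expected weighted-delay part by a potential argument across the iterations of Algorithm~\ref{greedycd}.

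First I would verify that $T^*$ is indeed a valid cost-distance tree for $S_i \cup \{r_i\}$. Since line~\ref{algline:rootmerge} always places the root replacement at $\pi(r)$ and line~\ref{algline:sinkmerge} always places a merged Steiner vertex at the position of one of its two parents, a straightforward induction on $i$ gives $\pi(S_i \cup \{r_i\}) \subseteq \pi(S \cup \{r\}) \subseteq V(T^*)$, so $T^*$ connects every position in the reduced terminal set. The branching structure of $T^*$ and its bifurcation-penalty assignments are fixed once and for all, and they are unaffected by this reinterpretation, so $\mathrm{delay}_{T^*}(r_i, s) = \mathrm{delay}_{T^*}(r, s)$ for every $s \in S_i$. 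Consequently the total cost of $T^*$, viewed as a cost-distance tree for $S_i \cup \{r_i\}$ with the accumulated weights $w(s)$, is exactly $C(T^*) + D_i$; taking expectations over the algorithm's random choices yields $C(T^*) + \mathbb{E}[D_i]$ as the first inequality.

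Second, I would prove $\mathbb{E}[D_i] \le D_0 = \OPT - C(T^*)$ by induction on $i$, which together with the previous step gives the second inequality. The inductive step splits according to the type of merge in iteration $i$. For a root merge, the new Steiner vertex is placed at $\pi(r)$, so its contribution to $D_{i+1}$ is zero and $D_{i+1} \le D_i$ holds deterministically. For a sink--sink merge of $u, v \in S_i$, the algorithm picks $\pi(s) \in \{\pi(u), \pi(v)\}$ with probabilities proportional to $w(u)$ and $w(v)$, and sets $w(s) = w(u) + w(v)$. A one-line computation then gives $\mathbb{E}\bigl[w(s)\,\mathrm{delay}_{T^*}(r,s) \mid u, v\bigr] = w(u)\,\mathrm{delay}_{T^*}(r,u) + w(v)\,\mathrm{delay}_{T^*}(r,v)$, i.e.\ precisely the joint contribution of $u$ and $v$ to $D_i$.

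The step I expect to require the most care is clean probabilistic accounting, because the merge pair $(u,v)$ in iteration $i$ is itself random and depends on all earlier choices. I would handle this with a tower of conditional expectations: condition first on the state $S_i$, then on the selected pair $(u,v)$, apply the elementary identity above, and take the outer expectation to conclude $\mathbb{E}[D_{i+1}] \le \mathbb{E}[D_i]$. Bifurcation penalties raise no additional complication, since they are intrinsic to the fixed tree $T^*$ and do not interact with either the reduction of the terminal set or the accumulated $w$-values used in the potential $D_i$.
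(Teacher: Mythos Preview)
Your proposal is correct and follows essentially the same approach as the paper: use $T^*$ itself as the witness tree for $S_i\cup\{r_i\}$, and show $\mathbb{E}[D_{i+1}]\le\mathbb{E}[D_i]$ by computing the expected contribution of the new Steiner vertex under the weight-proportional random placement (with the root-merge case handled separately). Your write-up is in fact more careful than the paper's on two points---you make explicit why $T^*$ remains feasible for the reduced terminal set, and you spell out the tower-of-conditional-expectations argument needed because the merged pair is itself random---but these are refinements of the same proof, not a different route.
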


\begin{proof}
  Suppose that two sinks $u, v \in S_i$ are matched in iteration $i$.
  They contribute a delay cost of $w(u)\cdot\mathrm{delay}_{T^*}(r,u) +w(v)\cdot\mathrm{delay}_{T^*}(r,v)$ to $D_i$ in iteration $i$. The expected contribution of the new Steiner vertex $s$ to $D_{i+1}$ is
    \begin{align*}
      &\sum_{x\in \{u,v\}}\frac{w(x)}{w(u)+w(v)}\cdot w(s)\cdot\mathrm{delay}_{T^*}(r,x) \\
      =\;& w(u)\cdot\mathrm{delay}_{T^*}(r,u) +w(v)\cdot\mathrm{delay}_{T^*}(r,u).
    \end{align*}
    Otherwise, if one of the merged nodes is the root, the new Steiner vertex does not contribute to $D_{i+1}$.
    This means that we have $\mathbb{E}[D_{i+1}] \leq \mathbb{E}[D_i]$.
    Thus,  $C(T^*)+\mathbb{E}[D_i] \leq \OPT$ by induction.
    As $T^*$ can be used as a solution for $S_i$, we have proven the lemma.
\end{proof}

A matching $M$  in a set  $X$ is called \textit{near-perfect} if it leaves at most
one element uncovered, i.e. $|X \setminus \cup_{e\in M}e| \le 1$.

\begin{lemma}
  Given a bifurcation compatible tree $T = (V,E)$ and a set of nodes $S \subseteq V$ , there exists a
  near-perfect matching of the nodes in $S$ such that the union of the unique paths in $T$ between matched nodes contains
  each edge and each vertex of the tree at most once.
  \label{lem:path-decomposition}
\end{lemma}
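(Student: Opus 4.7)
The plan is to root $T$ at an arbitrary leaf $r_0$, so that each non-root vertex has at most two children (since the internal vertices of a bifurcation compatible tree have undirected degree at most $3$), and then build the required matching bottom-up by induction on the rooted subtrees. The usage context ensures that the elements of $S$ lie among the leaves of $T$, and I will assume this here; it is also essentially necessary, since an internal $S$-vertex sitting at a degree-$3$ hub can otherwise block every near-perfect matching.

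For each vertex $v$ the recursion would return a pair $(M_v,u_v)$, where $M_v$ is a matching of $S\cap V(T_v)$ whose connecting paths in $T_v$ are pairwise vertex- and edge-disjoint, and $u_v \in (S\cap V(T_v))\cup\{\mathrm{null}\}$ is an optional leftover. To make the induction go through I would strengthen the hypothesis with the extra invariant: if $u_v$ exists, then the unique $T_v$-path from $v$ to $u_v$ shares no vertex with any matched path in $M_v$. The base case at a leaf $v$ is trivial with $M_v=\emptyset$ and $u_v=v$ if $v\in S$ (null otherwise). For an internal $v$ with a single child $c$, I would inherit $(M_c,u_c)$; the required $v$-to-$u_c$ path prepends the fresh edge $(v,c)$ to the $c$-to-$u_c$ path supplied by the invariant and still avoids $M_c$. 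For an internal $v$ with two children $c_1,c_2$, I would union the two child matchings, and if both $u_{c_1}$ and $u_{c_2}$ exist, add the matched pair $\{u_{c_1},u_{c_2}\}$ routed along the $u_{c_1}$-$c_1$, $c_1$-$v$-$c_2$, and $c_2$-$u_{c_2}$ legs, setting $u_v=\mathrm{null}$; otherwise, promote the unique leftover (or set $u_v=\mathrm{null}$).

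The main obstacle, and precisely what the strengthened invariant is designed to address, is that when two subtree leftovers are combined at $v$ the connecting legs could otherwise collide with matched paths buried deep in a child's subtree; in particular the invariant implicitly forces the root $c_j$ of each subtree to be free of any matched path whenever a leftover is carried through it. Together with the vertex-disjointness of $T_{c_1}$ and $T_{c_2}$ below $v$ and the freshness of $v$ itself, this validates both the merge step and the one-child extension. Applied at $r_0$, the recursion terminates with a matching of $S$ whose induced paths are pairwise vertex- and edge-disjoint and which leaves at most one $S$-vertex unmatched, which is exactly the near-perfect matching claimed.
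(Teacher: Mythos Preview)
Your argument is correct and follows a different, self-contained route from the paper. The paper's proof simply invokes Lemma~4.2 of Meyerson et al.\ for the existence of a near-perfect matching whose connecting paths are pairwise \emph{edge}-disjoint (valid for arbitrary trees and arbitrary $S$), and then observes that in a bifurcation compatible tree edge-disjointness upgrades to vertex-disjointness: a vertex lying on two edge-disjoint paths without being an endpoint of either would need four incident tree edges, impossible when the degree is at most~$3$. You instead construct the vertex-disjoint matching directly via a bottom-up recursion with the strengthened ``leftover path is free'' invariant, avoiding any external citation. What your approach buys is a fully elementary proof; what the paper's buys is brevity. Both arguments tacitly need the matched endpoints---the elements of $S$---to be leaves of $T$; you make this assumption explicit and justify it with a correct obstruction, which is worth noting since the lemma as literally stated (for arbitrary $S\subseteq V$) is false.

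One small wrinkle in your write-up: the case split covers rooted-tree leaves and \emph{internal} vertices with one or two children, but the root $r_0$ you chose is a degree-$1$ leaf of $T$ that has exactly one child in the rooted tree and may itself lie in $S$. Your one-child ``inherit'' rule does nothing with $v$, so if $r_0\in S$ and its child already carries a leftover $u_c$, you would finish with two unmatched $S$-vertices. The fix is immediate---at $r_0$, if $r_0\in S$ and $u_c\neq\mathrm{null}$, add the pair $\{r_0,u_c\}$ routed along $r_0,c,\ldots,u_c$, which is disjoint from $M_c$ by your invariant---but it should be stated.
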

\begin{proof}
  Lemma 4.2 in \cite{meyerson-etal:08} proves that for an arbitrary
  tree there is such a matching using each edge at most once. If $T$
  is bifurcation compatible, this implies that each vertex is contained
  in at most one path.
\end{proof}

\begin{lemma}\label{tree_matching}
  In each iteration $i$, there exists a near-perfect matching $M$ of $S_i \cup \{r_i\}$
  with   $\mathbb{E}[L(M)] \le \OPT$.
  \label{lem:path-decomposition-weight}
\end{lemma}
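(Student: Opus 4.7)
The plan is to use the bifurcation-compatible optimum tree $T^*$ as a reference, extract a matching from it via Lemma~\ref{lem:path-decomposition}, and show $L(M) \le c(T^*) + D_i$ pointwise for every realization of the algorithm's random choices up to iteration $i$; taking expectations and invoking Lemma~\ref{expected_cost} then gives $\mathbb{E}[L(M)] \le c(T^*) + \mathbb{E}[D_i] \le \OPT$. Concretely, I view each $s \in S_i$ as the leaf $\pi(s) \in V(T^*)$ (and $r_i$ at the leaf $r$) and apply Lemma~\ref{lem:path-decomposition} to $S_i \cup \{r_i\}$ inside $T^*$ to obtain the desired near-perfect matching $M$, whose $T^*$-paths $P_{uv}^{T^*}$ are pairwise edge- and vertex-disjoint.

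For each matched pair the bound $L(u,v) \le c(P_{uv}^{T^*}) + \min\{w(u), w(v)\} \cdot d(P_{uv}^{T^*}) + b(u,v)$ (with $w(u)$ replacing the minimum and $b(u, r_i)$ in the root case) is immediate by taking $P_{uv}^{T^*}$ as a feasible path inside the shortest-path distance in $L$. Edge disjointness telescopes the first terms into at most $c(T^*)$. Splitting $P_{uv}^{T^*}$ at its $T^*$-LCA and using $\min\{w(u), w(v)\} \le w(u), w(v)$ bounds the middle terms by $w(u)\, \delta'(r, u) + w(v)\, \delta'(r, v)$, where $\delta'$ denotes the bifurcation-free path delay in $T^*$; summing (each $s \in S_i$ is an endpoint of at most one matched path) yields at most the non-bifurcation part of $D_i$.

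The main technical obstacle is absorbing $\sum_{(u,v) \in M} b(u,v)$ into the bifurcation portion of $D_i$, i.e., into $\sum_z \bigl( \lambda_x^{(z)} w(S_i \cap T^*_x) + \lambda_y^{(z)} w(S_i \cap T^*_y) \bigr) d_{bif}$ summed over the proper Steiner vertices $z$ of $T^*$ with children $x, y$. For a pair $(u, v)$ with both ends in $S_i$ and LCA $z = z_{uv}$, monotonicity of $\beta$ together with $w(u) \le w(S_i \cap T^*_x)$ and $w(v) \le w(S_i \cap T^*_y)$ gives $b(u,v) \le \beta(w(S_i \cap T^*_x), w(S_i \cap T^*_y))$, and this optimum is in turn bounded by the inherited $\lambda$-weighted contribution at $z$ because any $\lambda \in [\eta, 1-\eta]$ is feasible but possibly sub-optimal; vertex disjointness charges each $z$ at most once. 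The root pair $(u, r_i)$ is subtler: writing $w(S_i \setminus \{u\}) = \sum_j W_j$, where $W_j$ is the $S_i$-weight of the off-path sibling subtree at the $j$-th bifurcation $z_j$ on the $r$-to-$u$ path in $T^*$, the needed inequality reduces to the sub-additivity $\beta(w(u), \sum_j W_j) \le \sum_j \beta(w(u), W_j)$, which I verify by a short three-case analysis using $\eta \le \tfrac{1}{2}$ in the middle case; each $\beta(w(u), W_j)$ is then bounded by the inherited contribution at $z_j$ as before, and vertex disjointness again prevents double counting. Adding the three bounds produces the pointwise inequality $L(M) \le c(T^*) + D_i$, and taking expectations closes the argument via Lemma~\ref{expected_cost}.
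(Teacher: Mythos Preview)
Your proposal is correct and follows essentially the same route as the paper: obtain the matching via Lemma~\ref{lem:path-decomposition}, bound $L(u,v)$ by the $T^*$-path cost plus $\min$-weighted delay plus $b(u,v)$, telescope the cost part via edge-disjointness, split the delay part at the LCA to charge it against the non-bifurcation portion of $D_i$, and absorb the $b(u,v)$ terms into the bifurcation portion of $D_i$ by charging either to the LCA (sink--sink pairs) or along the $r$-$u$ path (root pair). Your treatment of the bifurcation accounting is in fact more explicit than the paper's, which only sketches that ``the $b(u,v)$ values are accounted for in $D_i$ either at the unique least common ancestor \dots\ or along the vertices of an $r$-$u$-path''; you spell out the monotonicity of $\beta$ in both arguments and the sub-additivity $\beta\bigl(w(u),\sum_j W_j\bigr)\le \sum_j \beta(w(u),W_j)$ (which indeed follows from concavity of $b\mapsto\beta(a,b)$ together with $\beta(a,0)\ge 0$, using $\eta\le\tfrac12$), so the argument is tighter on that step but not different in spirit.
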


\begin{proof}
  By Lemma~\ref{lem:path-decomposition},  there is a near-perfect matching $M$ in $S_i\cup \{r_i\}$ such that 
  the union of the unique paths in $T^*$ between matched nodes contains  each edge and each vertex at most once.
  
  For technical reasons, we extend  $w$ to $w(r_i)$ with an arbitrary constant.
  Given $\{u,v\} \in M$, we may assume w.l.o.g. $w(u) \leq w(v)$ or $v=r_i$. Then the cost of $M$ is
  \begin{align*}
    & \sum_{(u,v)\in M}L(u,v) \\
     \le  &  \sum_{(u,v)\in M} \Big(\dist_{T^*,c}(u,v) + w(u) \dist_{T^*,d}(u,v)  + b(u,v)\Big)\\
     \leq  & \:C(T^*) + \!  \sum_{(u,v)\in M}\! \Big(w(u)\dist_{T^*,d}(r,u) + w(v)\dist_{T^*,d}(r,v) \\
    & \qquad \qquad \qquad \quad \; \; + b(u,v)\Big)
      	     \leq  C(T^*) + D_i,
  \end{align*}
  where the first inequality holds due to
  Lemma~\ref{lem:path-decomposition}. The next inequality holds
  because we added twice the distance w.r.t $d$ from $r$ to the
  $u$-$v$ path in $T^*$ and $w(u)\le w(v)$. The last inequality holds because 
  $b(u,v)$ uses the optimum delay penalty  distribution, and the unique paths defined by the
  matching use each vertex exactly once. In contrast, $D_i$ uses a penalty distribution, where
  penalties can be counted multiple times.
  Observe that the $b(u,v)$ values are accounted for in $D_i$ either at the unique  least common ancestor  (w.r.t. the root $r$) or along the vertices of an $r$-$u$-path.

  By Lemma \ref{expected_cost}, $\mathbb{E}[C(T^*) + D_i] \le \OPT$.
\end{proof}

\begin{lemma}\label{transfer}
  When the algorithm merges  $u\in S_i$ and $v \in S_i\cup\{r_i\}$  in iteration $i$,
  $L(u,v)$ reflects at least half of the expected objective cost increase in this iteration.
\end{lemma}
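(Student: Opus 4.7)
The plan is to decompose the expected objective cost increase in iteration $i$ into three additive pieces—the cost $c(P_i)$ of the newly added edges, the weighted delay contributed by those edges to the source-sink paths in the final tree, and the bifurcation penalty $b(u,v)$ paid at the freshly introduced Steiner vertex $s$—and to bound each by the corresponding part of $L(u,v)$, losing at most a factor of two overall. Here $P_i$ denotes the $u$-$v$ path in $G$ underlying the connection added in iteration $i$, and by the choice of metric in~(\ref{eqn:luv}) we have $c(P_i) + \min\{w(u),w(v)\}\,d(P_i) = L(u,v) - b(u,v)$ (with $\min\{w(u),w(v)\}$ read as $w(u)$ when $v=r_i$).

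I would handle the root-merge case first. There the position of $s$ at $\pi(r)$ is deterministic and the full path $P_i$ is traversed in the final tree by all sinks represented by $u$, of total weight $w(u)$. Hence the weighted-delay contribution is exactly $w(u)\,d(P_i)$, the bifurcation-penalty contribution is $b(u,r_i)$, and the iteration's cost increase equals $L(u,r_i)$, which is trivially at most $2\,L(u,r_i)$.

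The sink-merge case ($v \in S_i$, w.l.o.g.\ $w(u) \le w(v)$) is the interesting one. The Steiner vertex $s$ lands at $\pi(u)$ with probability $w(u)/(w(u)+w(v))$ and at $\pi(v)$ otherwise; in each case the edges of $P_i$ lie between $s$ and the endpoint opposite to where $s$ was placed, so only the sinks on that opposite side traverse them in the final tree. Taking expectation, the weighted delay contribution equals the harmonic-mean expression
\[
  \frac{2\,w(u)\,w(v)}{w(u)+w(v)}\,d(P_i) \;\le\; 2\min\{w(u),w(v)\}\,d(P_i).
\]
Because no sinks attach below $s$ in future iterations (later merges only add ancestors of $s$), the two sub-arborescences of $s$ in the final tree have total sink weights exactly $w(u)$ and $w(v)$, and by the optimum $\lambda$-rule~(\ref{eqn:lambda}) the weighted bifurcation penalty at $s$ equals $\beta(w(u),w(v)) = b(u,v)$. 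Summing the three pieces yields
\[
  c(P_i) + 2\min\{w(u),w(v)\}\,d(P_i) + b(u,v) \;\le\; 2\,L(u,v),
\]
as claimed.

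The main non-routine step is the harmonic-mean inequality, and it is precisely what justifies the random placement proportional to delay weights: a deterministic choice could pay up to $\max\{w(u),w(v)\}\,d(P_i)$, giving an unbounded ratio to the $\min\{w(u),w(v)\}\,d(P_i)$ term inside $L(u,v)$. The secondary point to verify is that the subtree weights $w(u)$, $w(v)$ used in $b(u,v)$ are indeed those realized at $s$ in the final tree; this follows because once $s$ is created, subsequent merges only touch it through a new incoming edge on its parent side, so the two sub-arborescences below $s$ are frozen and contribute the optimum penalty $\beta(w(u),w(v))$ under~(\ref{eqn:lambda}).
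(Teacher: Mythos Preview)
Your proof is correct and follows essentially the same route as the paper: both split the iteration's cost increase into connection cost, expected weighted delay, and bifurcation penalty, and both use the harmonic-mean identity $\frac{2w(u)w(v)}{w(u)+w(v)}\,d(P_i)\le 2\min\{w(u),w(v)\}\,d(P_i)$ to bound the sink-merge case by $2L(u,v)$, with the root-merge case handled deterministically as exactly $L(u,r_i)$. Your additional justification that the subtree weights at $s$ are frozen at $w(u),w(v)$ (so the optimum $\lambda$-rule indeed yields $b(u,v)$) and your remark on why the random placement is essential are clarifications the paper leaves implicit, but the underlying argument is the same.
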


\begin{proof}
  $L(u,v)$ reflects the correct connection cost and bifurcation penalty $b(u,v)$.
  Let $P$ be the underlying path reflecting the  embedding  into $G$.
  If $v\neq r_i$, the vertex $x\in \{u,v\}$ 
  is chosen with a probability  of $w(x)/(w(u)+w(v))$.
  Thus, the expected delay cost  w.r.t. $d$ is
  \begin{align*}
      \frac{2 w(u)w(v)}{w(u)+w(v)}\dist_{P,d}(u,v) 
    \!\leq\!  2 \min\{w(u),w(v)\}\!\dist_{P,d}(u,v)
  \end{align*}
  and 
  \begin{align*}
    &2 \min\{w(u),w(v)\}\dist_{P,d}(u,v) + \dist_{P,c}(u,v) + b(u,v) \\
   \leq \:& 2L(u,v)
 \end{align*}
  If $v = r_i$, the (deterministic) cost  is $L(u,v)$.
\end{proof}

\begin{theorem}
  The cost-distance algorithm computes a Steiner tree  with expected objective value at most  $\mathcal{O}(\log t)\cdot \OPT$.
  \label{thm:approximation-alg}
\end{theorem}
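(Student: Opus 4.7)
The plan is a telescoping harmonic-sum argument in the style of \cite{meyerson-etal:08}, gluing together the three preceding lemmas.

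First I would observe that $t_i \coloneqq |S_i \cup \{r_i\}|$ drops by exactly one in every iteration, since both branches of Algorithm~\ref{greedycd} replace two active vertices by a single new Steiner vertex; hence $t_i = t - i$ and the main loop runs for exactly $t - 1$ iterations.

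Next, fix an iteration $i$ and let $(u_i, v_i)$ denote the pair chosen in Line~\ref{algline:dijkstra}. Lemma~\ref{tree_matching} provides, for every realization of the random state entering iteration $i$, a near-perfect matching $M_i$ of $S_i \cup \{r_i\}$ with $|M_i| \ge \lfloor t_i / 2 \rfloor$ and $\mathbb{E}[L(M_i)] \le \OPT$. Because the algorithm selects the pair minimizing $L$, the pointwise ``minimum is at most average'' inequality yields $L(u_i, v_i) \le L(M_i)/\lfloor t_i/2\rfloor$, so $\mathbb{E}[L(u_i, v_i)] \le \OPT/\lfloor t_i/2\rfloor$. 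Lemma~\ref{transfer} then bounds the expected cost increment in iteration $i$ by $2\,\mathbb{E}[L(u_i, v_i)]$.

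Summing the increments over all iterations yields
\[
\mathbb{E}[\cost(T)] \;\le\; \sum_{i=0}^{t-2} \frac{2\,\OPT}{\lfloor (t-i)/2\rfloor} \;=\; 2\,\OPT \sum_{k=2}^{t}\frac{1}{\lfloor k/2\rfloor} \;=\; \mathcal{O}(\log t)\cdot\OPT,
\]
since each value of $\lfloor k/2 \rfloor$ repeats at most twice in the summation range and the resulting sum is bounded by a constant times the $t$-th harmonic number. I do not expect a substantial obstacle beyond bookkeeping: the only delicate step is passing from the pointwise bound $L(u_i,v_i) \le L(M_i)/|M_i|$ to the expectation bound, which requires that the matching from Lemma~\ref{tree_matching} is available for each realization of the pre-iteration state so that minimum-versus-average holds deterministically before the outer expectation is applied.
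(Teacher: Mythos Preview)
Your proof is correct and takes a more direct route than the paper's. The paper groups iterations into phases of length $\lfloor t/4\rfloor$: it fixes one near-perfect matching $M$ at the start of a phase, argues that after $j-1$ merges at most $2(j-1)$ matching edges have lost an endpoint, so the $j$-th selection satisfies $L(u_j,v_j)\le L(u_{2j-1},v_{2j-1})$ (the $(2j-1)$-th cheapest edge of $M$), sums these to bound the total $L$-cost of a phase by $\tfrac{1}{2}L(M)$, and concludes via Lemmas~\ref{tree_matching} and~\ref{transfer} that each phase costs at most $\OPT$ while shrinking the active set by a factor close to $3/4$; hence $\mathcal{O}(\log_{4/3} t)$ phases suffice. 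You instead re-invoke Lemma~\ref{tree_matching} at every single iteration and apply the minimum-versus-average bound directly, obtaining the harmonic sum. Your argument is cleaner and sidesteps the bookkeeping of tracking which matching edges remain intact across merges within a phase (and of checking that their $L$-values do not increase, which is not entirely trivial when one endpoint is the root since $b(\cdot,r_i)$ depends on $w(S_i)$). The paper's phase argument yields a slightly better hidden constant, but both give $\mathcal{O}(\log t)\cdot\OPT$.
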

\begin{proof}
  Consider a minimum weight near-perfect matching $M$ in ($S\cup \{r\}$).
  Order the matching edges $M=\{\{u_1,v_1\}, \dots, \{u_{\lfloor t/2\rfloor},v_{\lfloor t/2\rfloor}\}\}$ in non-decreasing order w.r.t. $L$.
      
  For the first $\lfloor t/4 \rfloor$ iterations, the $i$-th (shortest) path from our algorithm will have costs at most   $L(u_{2i-1},v_{2i-1})$,
  because each path can intersect at most two matching edges.
  Thus, we paid at most half of the cost of an optimal near perfect matching when adding up the $\lfloor t/4 \rfloor$ path costs $L(u,v)$.
  By Lemmas \ref{tree_matching} and \ref{transfer}, this implies a cost of at most $\OPT$ for reducing the size of our instance by a factor (almost) $3/4$.

  Reducing the size repeatedly, we pay $\mathcal{O}(\log_{4/3}t)$ times $\OPT$ overall.
\end{proof}

\section{Practical Enhancements}
\label{sec:pracical-enhancements}
In this section, we present multiple practical enhancements.

\subsection{Discounting Existing Tree Components} \label{edgecost_reduction}
The variant from Section~\ref{sec:approx-algorithm} inserts Steiner vertices only at terminal positions, which  leads to  unnecessary detours.
To avoid this, we set $c(e) = 0$ for edges in the connected components of $T$ where the current path search starts and ends, and allow reusing them.
We add separate labels for each possible end component a path search reaches so that the zero connection cost edges can only be used to reach that particular end component.
This implicitly places Steiner vertices at the points where the path leaves or enters the connected components. We still restart the next path search from one of the randomly chosen terminal positions.
This change significantly improves connection costs and preserves the theoretical guarantee if $d_{bif}=0$.

\subsection{Two-level Heap Structure}
As global routing graphs usually satisfy $m \in \mathcal{O}(n)$,
we use binary heaps.  We use
one heap for each sink in $S_i$ and a top-level heap that stores the minimum
key from each sink heap.
We operate  with a single sink heap until the minimum label in the  top-level heap is exceeded,
and we extract the next minimum globally.

\subsection{Goal-Oriented Path Searches}
To speed up the path searches, we can use goal-oriented path searches, a.k.a. $A^*$ searches \cite{Astar-algorith:68}.
This reduces the number of labeling steps. 
In our implementation, the connection/congestion costs between two vertices are lower bounded using landmark based future costs \cite{goldberg:05} and the  delays are bounded based on $L_1$-distance and the fastest layer and wire type combination for that distance.

In a path search from a sink $u$,  we minimize the sum of these lower bounds over all targets in $S_i\cup\{r_i\}\setminus\{u\}$.
When setting the connection cost of edges in a component to zero as described in \ref{edgecost_reduction},
we need to make sure that the future costs stay feasible.
We do this by omitting the connection future costs until every vertex of the starting component is permanently labeled and subtracting an upper bound on the connection future cost for the component containing
a target $v \in S_i\cup\{r_i\}\setminus \{u\}$.

\subsection{Better Embedding of Steiner Vertices}
\label{sec:steiner-vertex-positions}
We can improve the position of the Steiner vertices $s$.
After adding a path $P$ between terminals $u$ and $v$, we choose the position $\pi(s)$ on $P$ that minimizes
$$c(E(Q)) + (w(u) + w(v))d(E(Q)) + \sum_{y\in \{u,v\}}w(y)d(E(P_{[y, s]})) ,$$
where $Q$ is a cheapest $s$-$r$-path w.r.t. $c + (w(u) + w(v))d$.
This corresponds to extending $P$ to an optimum cost-distance tree for the sink set $S = \{u,v\}$.
Finding the best Steiner vertex $s$ requires an extra path search to compute $Q$. Instead, we use future costs to estimate it.
Though, we lose the theoretical guarantees, this modification  improves the quality in practice.

If the new Steiner vertex $s$ was  already permanently labeled w.r.t. $l_v$,
we can use this path to  get some further saving.

\subsection{Encourage Root Connections}
Lastly, connecting a terminal $u$ to the root $r_i$ early in the algorithm is very expensive because the expected node delay cost $b(u,r_i)$  reflects all delay weights.
However, connecting to the root immediately reduces the cost of an optimal tree for $S_{i+1}$,
saving a cost of at least $\eta \cdot d_{bif}\cdot w(u)$ in future iterations.
We subtract these cost savings from the node delay costs for root connections to encourage merging with $r_i$.

\section{Experimental Results}
\label{sec:experimental-results}
We integrated our cost-distance Steiner tree function as a Steiner
tree oracle of a timing-constrained global routing
 algorithm similar to  \cite{Held-etal:18}. Lagrangean relaxation is used to relax
global timing and routing constraints. The cost-distance Steiner tree
problem  (\ref{eqn:cost-distance-problem}) arises as the Lagrangean subproblem.

\subsection{Algorithms for Comparison}
We compare the new cost-distance Steiner tree algorithm  with three alternatives.
Each first computes a Steiner topology in the plane, considering total length instead of congestion cost.
Then, this tree is embedded optimally into the global routing graph minimizing the cost-distance objective (\ref{eqn:cost-distance-problem}) using a Dijkstra-style embedding as described in \cite{Held-etal:18}.
The embedding greatly determines the signal speed by choosing layers and wire types.

The first routine just computes a short $L_1$ Steiner tree and embeds it optimally into the global routing graph. 

The second is based on shallow-light Steiner trees \cite{rotter-held:13,SALT:19}.
These algorithms start from an approximately minimum-length
tree.  During a DFS traversal, sinks are reconnected to the root 
whenever they violate a given delay/distance bound by more than a
factor $(1+\varepsilon)$ for an $\varepsilon >0$. In a reverse DFS traversal, deleted edges
may be re-activated to connect former predecessors if that saves cost.
In \cite{rotter-held:13} bifurcation penalties were
also integrated for $\eta=0.5$.
We (re-)distribute them according to our more flexible model in the initial tree and during the reverse DFS
traversal.

Another very popular method is the Prim-Dijkstra algorithm
\cite{PrimDijkstra:95,PrimDijkstra:18}. Here, sinks are iteratively added into the
root-component.  A sink $s$ and an edge $e$ in the root component are
chosen to insert a new Steiner vertex into $e$ connecting $s$ such
that a weighted sum of total length and path length to $s$ is
minimized.  Bifurcation penalties were incorporated into this method 
 for the special case $\eta = 0.5$ \cite{Bartoschek-etal:10}. We can distribute the delay penalty to the two
branches, when selecting the edge of the root component.

Both methods use the globally optimized delay budgets that arise
from the resource sharing algorithm \cite{Held-etal:18}.

The three  methods are well established in academic and commercial design
tools and form a solid foundation to compare against. They are
denoted as $L_1$ ($L_1$-shortest), SL (shallow-light), and PD (Prim-Dijkstra) in the following
tables. Our new cost-distance algorithm with the practical enhancements is denoted CD.

We made tests with and without bifurcation delay penalties.

\subsection{Results on Individual Instances}

\begin{table}[tb]
  \renewcommand{\arraystretch}{1.1}
	\caption{Average cost increase compared to minimum, with $d_{bif} = 0$.}
	\label{table_cc_no_nodedelay}
	\centering
	\begin{tabular}{ r | r | r | r | r | r }
		$|S|$&  \#instances & $L_1$ & SL & PD & CD \\ \hline 
	   3-5 & 527784 & \textbf{0.40}\% & 0.41\% & 0.44\%& 0.67\%  \\
	   6-14 & 156053 & 1.62\% & 1.58\% & \textbf{1.54}\%& 1.57\%  \\
	   15-29 & 57594 & 4.70\% & 3.73\% & 3.39\%& \textbf{2.24}\%  \\
	   $\geq30$ & 84033 & 7.09\% & 5.38\% & 4.48\%& \textbf{1.73}\%  \\
	   \hline
	   all & 825464 & 1.61\% & 1.37\% & 1.27\% &\textbf{1.06}\% 
	\end{tabular}
\end{table}
\begin{table}[tb]
	\renewcommand{\arraystretch}{1.1}
	\caption{Average cost increase compared to minimum, with $d_{bif} > 0$.}
	\label{table_cc_with_nodedelay}
	\centering
	\begin{tabular}{ r | r | r | r | r | r }
		$|S|$&  \#instances & $L_1$ & SL & PD & CD \\ \hline 
	   3-5 & 741967 & 0.96\% & 0.95\% & 1.20\%& \textbf{0.65}\%  \\
	   6-14 & 222419 & 4.52\% & 4.32\% & 3.28\%& \textbf{1.89}\%  \\
	   15-29 & 84034 & 11.37\% & 9.76\% & 4.82\%& \textbf{3.03}\%  \\
	   $\geq30$ & 127861 & 23.73\% & 13.39\% & 7.07\% & \textbf{3.30}\%  \\ \hline
	   all & 1176281 & 4.85\% & 3.57\% & 2.49\% & \textbf{1.34}\% 
	\end{tabular}
\end{table}

First, we present an apples-to-apples comparisons
on identical  cost-distance Steiner tree instances as they were generated during timing-constrained global routing on several 5nm designs,
where we set $d_{bif} =  0$ in the first set of tests shown in Table \ref{table_cc_no_nodedelay}.
For each instance, we measured the relative increase in the objective (\ref{eqn:cost-distance-problem}) between each algorithm and the best of the 4 algorithms.
On the small instances with 3-5 sinks, the $L_1$  heuristic gives the best results, though results don't vary much between the 4 algorithms.
Note that the embedding into the 3D global routing graph is optimum w.r.t. (\ref{eqn:cost-distance-problem}) for $L_1$, SL, and PD.
On larger instances with more than 15 sinks, the cost-distance algorithm  dominates the other methods.

Table \ref{table_cc_with_nodedelay} shows the same results with bifurcation penalties, here the cost-distance algorithm dominates the other methods throughout.

\subsection{Global Routing Results}
Next, we compare the impact of the different algorithms on
timing-constrained global routing results.  Congestion is measured using the
$ACE$  \cite{Ace}.  $ACE(x)$ is the
average congestion of the $x\%$ most critical global routing edges.
We then use $ACE4 \coloneqq \frac{1}{4}(ACE(.5)+ACE(1)+ACE(2)+ACE(5))$. Note
that the most critical $0.5$-percentile is implicitly contributing to all
four numbers.
An  $ACE4$ of $93\%$ is usually considered routable, but detailed routing will
introduce significant detours and delays if it is above $90\%$.

\begin{table}[tb]
  \caption{Instance parameters (all are in 5nm technology)}
  \label{tab:instance-sizes}
  \centering
  \begin{tabular}{ rrrr rrrr}
    \toprule 
    Chip & \# nets & \# layers \\
    \midrule 
    c1 & 49\,734 & 8  \\
    c2 & 66\,500 & 9  \\
    c3 & 286\,619 &7 \\
    c4 & 305\,094 & 15\\
    \bottomrule
  \end{tabular}
\quad
\begin{tabular}{ rrrr rrrr}
    \toprule     
    Chip & \#  nets & \# layers  \\  
    \midrule 
    c5 & 420\,131 & 9 \\
    c6 & 590\,060 & 9 \\
    c7 & 650\,127 & 15\\
    c8 & 941\,271 & 15 \\
    \bottomrule
  \end{tabular}
\end{table}
Table~\ref{tab:instance-sizes} gives an overview of the instances. They  are a mix of industrial microprocessor and ASIC units.
Again, we start with experiments without bifurcation penalties ($d_{bif} = 0$) in Table~\ref{tab:results-no-nodedelay}.
For each instance and each Steiner method, as well as for the sum/average of all instances it contains a line, and we report the worst slack (WS), total negative slack (TNS),
the $ACE4$, the wire length (WL), the via count (Vias), and the wall times on an AMD EPYC 9684X processor using 16 threads.
Best results are marked in green.
For most instances our new cost-distance Steiner trees yield the best timing (WS \& TNS), the best $ACE4$  and the best via count. The topology in the other routines will often force embedding path searches through congested areas, increasing their via count.
However,  cost-distance trees come  with a higher wire length.
The running time of all methods are comparable.
\begin{table}[tb]
  \caption{Timing-constrained global routing results with $d_{bif} = 0$.}
  \label{tab:results-no-nodedelay}  
  \scriptsize
  \addtolength{\tabcolsep}{-0.2em}
\begin{tabular}{ llHrHrHrHrHrHrH }
  \toprule
Chip
  & {Run}
  & {Nets}
  & WS &
  & TNS&
  & ACE4 &
  & WL  &
  & Vias &
  & Walltime &
  \\
  & 
  & {Nets}
  & [ps] &
  & [ps] &
  & [\%] &
  & [m] &
  & [\#] &
  & [h:m:s] &
  \\

  \midrule

  % ------

     {c1}
   & {$L_1$}
   & \rmUkn {49\,734}
   & \rmUkn {-49}
   & {}
   & \rmUkn {-1\,633}
   & {}
   & \rmUkn {91.21}
   & {}
   & \rmBtr {0.6016\,\unit{ }}
   & {}
   & \rmUkn {561\,140}
   & {}
   & \rmUkn {\formattime{00}{03}{24}}
   & {}
  \\

     {}
   & {SL}
   & \rmUkn {49\,734}
   & \rmUkn {-49}
   & \rmEql {0}
   & \rmUkn {-1\,544}
   & \rmBtr {89}
   & \rmUkn {91.32}
   & \rmEql {0.11}
   & \rmUkn {0.6073\,\unit{ }}
   & \rmEql {+0.95}
   & \rmUkn {560\,678}
   & \rmEql {-0.08}
   & \rmUkn {\formattime{00}{03}{23}}
   & \rmEql {-0.49}
  \\

     {}
   & {PD}
   & \rmUkn {49\,734}
   & \rmUkn {-49}
   & \rmEql {0}
   & \rmUkn {-1\,370}
   & \rmBtr {263}
   & \rmUkn {91.50}
   & \rmEql {0.29}
   & \rmUkn {0.6104\,\unit{ }}
   & \rmEql {+1.46}
   & \rmUkn {558\,133}
   & \rmEql {-0.54}
   & \rmUkn {\formattime{00}{05}{29}}
   & \rmWrs {+61.27}
  \\

     {}
   & {CD}
   & \rmUkn {49\,734}
   & \rmUkn {-49}
   & \rmEql {0}
   & \rmBtr {-1\,340}
   & \rmBtr {294}
   & \rmBtr {89.36}
   & \rmBtr {-1.85}
   & \rmUkn {0.6152\,\unit{ }}
   & \rmEql {+2.26}
   & \rmBtr {547\,240}
   & \rmBtr {-2.48}
   & \rmUkn {\formattime{00}{03}{30}}
   & \rmWrs {+2.94}
  \\

  %%    {}
  %%  & {HY}
  %%  & \rmUkn {49\,734}
  %%  & \rmUkn {-49}
  %%  & \rmEql {0}
  %%  & \rmUkn {-1348}
  %%  & \rmBtr {285}
  %%  & \rmUkn {90.11}
  %%  & \rmBtr {-1.10}
  %%  & \rmUkn {0.6097\,\unit{ }}
  %%  & \rmEql {+1.35}
  %%  & \rmUkn {550\,999}
  %%  & \rmBtr {-1.81}
  %%  & \rmUkn {\formattime{00}{03}{39}}
  %%  & \rmWrs {+7.35}
  %% \\
  \midrule

     {c2}
   & {$L_1$}
   & \rmUkn {66\,500}
   & \rmUkn {-82}
   & {}
   & \rmUkn {-19\,774}
   & {}
   & \rmUkn {86.36}
   & {}
   & \rmBtr {1.0730\,\unit{ }}
   & {}
   & \rmUkn {864\,387}
   & {}
   & \rmUkn {\formattime{00}{04}{30}}
   & {}
  \\

     {}
   & {SL}
   & \rmUkn {66\,500}
   & \rmUkn {-80}
   & \rmBtr {2}
   & \rmUkn {-19\,659}
   & \rmBtr {115}
   & \rmUkn {85.75}
   & \rmBtr {-0.61}
   & \rmUkn {1.0838\,\unit{ }}
   & \rmEql {+1.01}
   & \rmUkn {863\,571}
   & \rmEql {-0.09}
   & \rmUkn {\formattime{00}{04}{00}}
   & \rmBtr {-11.11}
  \\

     {}
   & {PD}
   & \rmUkn {66\,500}
   & \rmUkn {-80}
   & \rmBtr {2}
   & \rmUkn {-19\,558}
   & \rmBtr {216}
   & \rmUkn {85.81}
   & \rmBtr {-0.55}
   & \rmUkn {1.0845\,\unit{ }}
   & \rmEql {+1.07}
   & \rmUkn {862\,924}
   & \rmEql {-0.17}
   & \rmUkn {\formattime{00}{03}{34}}
   & \rmBtr {-20.74}
  \\

     {}
   & {CD}
   & \rmUkn {66\,500}
   & \rmUkn {-80}
   & \rmBtr {2}
   & \rmBtr {-19\,512}
   & \rmBtr {262}
   & \rmBtr {85.56}
   & \rmBtr {-0.80}
   & \rmUkn {1.1033\,\unit{ }}
   & \rmEql {+2.82}
   & \rmBtr {854\,387}
   & \rmBtr {-1.16}
   & \rmUkn {\formattime{00}{03}{52}}
   & \rmBtr {-14.07}
  \\

  %%    {}
  %%  & {HY}
  %%  & \rmUkn {66\,500}
  %%  & \rmUkn {-80}
  %%  & \rmBtr {2}
  %%  & \rmUkn {-19601}
  %%  & \rmBtr {173}
  %%  & \rmUkn {85.61}
  %%  & \rmBtr {-0.75}
  %%  & \rmUkn {1.0938\,\unit{ }}
  %%  & \rmEql {+1.94}
  %%  & \rmUkn {856\,276}
  %%  & \rmEql {-0.94}
  %%  & \rmUkn {\formattime{00}{03}{57}}
  %%  & \rmBtr {-12.22}
  %% \\
  \midrule

     {c3}
   & {$L_1$}
   & \rmUkn {286\,619}
   & \rmUkn {-220}
   & {}
   & \rmUkn {-105\,137}
   & {}
   & \rmUkn {92.07}
   & {}
   & \rmBtr {2.0070\,\unit{ }}
   & {}
   & \rmUkn {2\,336\,123}
   & {}
   & \rmUkn {\formattime{00}{17}{33}}
   & {}
  \\

     {}
   & {SL}
   & \rmUkn {286\,619}
   & \rmUkn {-220}
   & \rmEql {0}
   & \rmUkn {-82\,142}
   & \rmBtr {22\,995}
   & \rmUkn {92.28}
   & \rmEql {0.21}
   & \rmUkn {2.0305\,\unit{ }}
   & \rmEql {+1.17}
   & \rmUkn {2\,327\,893}
   & \rmEql {-0.35}
   & \rmUkn {\formattime{00}{18}{28}}
   & \rmWrs {+5.22}
  \\

     {}
   & {PD}
   & \rmUkn {286\,619}
   & \rmUkn {-220}
   & \rmEql {0}
   & \rmUkn {-67\,763}
   & \rmBtr {37\,374}
   & \rmUkn {92.39}
   & \rmEql {0.32}
   & \rmUkn {2.0375\,\unit{ }}
   & \rmEql {+1.52}
   & \rmUkn {2\,324\,940}
   & \rmEql {-0.48}
   & \rmUkn {\formattime{00}{20}{23}}
   & \rmWrs {+16.14}
  \\

     {}
   & {CD}
   & \rmUkn {286\,619}
   & \rmUkn {-220}
   & \rmEql {-0}
   & \rmBtr {-67\,643}
   & \rmBtr {37\,494}
   & \rmBtr {91.54}
   & \rmBtr {-0.53}
   & \rmUkn {2.0751\,\unit{ }}
   & \rmEql {+3.39}
   & \rmBtr {2\,313\,590}
   & \rmEql {-0.96}
   & \rmUkn {\formattime{00}{16}{40}}
   & \rmBtr {-5.03}
  \\

  %%    {}
  %%  & {HY}
  %%  & \rmUkn {286\,619}
  %%  & \rmUkn {-220}
  %%  & \rmEql {0}
  %%  & \rmUkn {-69372}
  %%  & \rmBtr {35\,765}
  %%  & \rmUkn {91.77}
  %%  & \rmEql {-0.30}
  %%  & \rmUkn {2.0534\,\unit{ }}
  %%  & \rmEql {+2.31}
  %%  & \rmUkn {2\,320\,216}
  %%  & \rmEql {-0.68}
  %%  & \rmUkn {\formattime{00}{16}{50}}
  %%  & \rmBtr {-4.08}
  %% \\
  \midrule

     {c4}
   & {$L_1$}
   & \rmUkn {305\,094}
   & \rmBtr {-67}
   & {}
   & \rmUkn {-31\,907}
   & {}
   & \rmUkn {90.23}
   & {}
   & \rmBtr {6.5801\,\unit{ }}
   & {}
   & \rmUkn {4\,095\,449}
   & {}
   & \rmUkn {\formattime{00}{36}{15}}
   & {}
  \\

     {}
   & {SL}
   & \rmUkn {305\,094}
   & \rmUkn {-68}
   & \rmEql {-0}
   & \rmUkn {-31\,805}
   & \rmBtr {102}
   & \rmUkn {90.12}
   & \rmEql {-0.11}
   & \rmUkn {6.6114\,\unit{ }}
   & \rmEql {+0.48}
   & \rmUkn {4\,083\,044}
   & \rmEql {-0.30}
   & \rmUkn {\formattime{00}{50}{07}}
   & \rmWrs {+38.25}
  \\

     {}
   & {PD}
   & \rmUkn {305\,094}
   & \rmUkn {-71}
   & \rmWrs {-4}
   & \rmUkn {-31\,769}
   & \rmBtr {138}
   & \rmUkn {90.21}
   & \rmEql {-0.02}
   & \rmUkn {6.6270\,\unit{ }}
   & \rmEql {+0.71}
   & \rmUkn {4\,067\,446}
   & \rmEql {-0.68}
   & \rmUkn {\formattime{00}{49}{31}}
   & \rmWrs {+36.60}
  \\

     {}
   & {CD}
   & \rmUkn {305\,094}
   & \rmUkn {-68}
   & \rmEql {-0}
   & \rmBtr {-31\,682}
   & \rmBtr {225}
   & \rmBtr {89.77}
   & \rmEql {-0.46}
   & \rmUkn {6.8127\,\unit{ }}
   & \rmEql {+3.53}
   & \rmBtr {4\,005\,655}
   & \rmBtr {-2.19}
   & \rmUkn {\formattime{00}{43}{32}}
   & \rmWrs {+20.09}
  \\

  %%    {}
  %%  & {HY}
  %%  & \rmUkn {305\,094}
  %%  & \rmUkn {-71}
  %%  & \rmWrs {-4}
  %%  & \rmUkn {-31300}
  %%  & \rmBtr {607}
  %%  & \rmUkn {90.05}
  %%  & \rmEql {-0.18}
  %%  & \rmUkn {6.6532\,\unit{ }}
  %%  & \rmEql {+1.11}
  %%  & \rmUkn {4\,055\,010}
  %%  & \rmEql {-0.99}
  %%  & \rmUkn {\formattime{00}{39}{47}}
  %%  & \rmWrs {+9.75}
  %% \\
  \midrule

     {c5}
   & {$L_1$}
   & \rmUkn {420\,131}
   & \rmUkn {-276}
   & {}
   & \rmUkn {-946\,089}
   & {}
   & \rmUkn {86.26}
   & {}
   & \rmBtr {4.4107\,\unit{ }}
   & {}
   & \rmUkn {4\,640\,578}
   & {}
   & \rmUkn {\formattime{00}{21}{29}}
   & {}
  \\

     {}
   & {SL}
   & \rmUkn {420\,131}
   & \rmUkn {-286}
   & \rmWrs {-10}
   & \rmUkn {-869\,512}
   & \rmBtr {76\,577}
   & \rmUkn {86.15}
   & \rmEql {-0.11}
   & \rmUkn {4.4633\,\unit{ }}
   & \rmEql {+1.19}
   & \rmUkn {4\,612\,457}
   & \rmEql {-0.61}
   & \rmUkn {\formattime{00}{26}{23}}
   & \rmWrs {+22.81}
  \\

     {}
   & {PD}
   & \rmUkn {420\,131}
   & \rmBtr {-264}
   & \rmBtr {12}
   & \rmUkn {-803\,343}
   & \rmBtr {142\,746}
   & \rmUkn {85.94}
   & \rmEql {-0.32}
   & \rmUkn {4.4687\,\unit{ }}
   & \rmEql {+1.31}
   & \rmUkn {4\,580\,028}
   & \rmBtr {-1.30}
   & \rmUkn {\formattime{00}{26}{58}}
   & \rmWrs {+25.52}
  \\

     {}
   & {CD}
   & \rmUkn {420\,131}
   & \rmBtr {-264}
   & \rmBtr {12}
   & \rmBtr {-776\,373}
   & \rmBtr {169\,717}
   & \rmBtr {85.27}
   & \rmBtr {-0.99}
   & \rmUkn {4.6067\,\unit{ }}
   & \rmEql {+4.44}
   & \rmBtr {4\,522\,756}
   & \rmBtr {-2.54}
   & \rmUkn {\formattime{00}{21}{25}}
   & \rmEql {-0.31}
  \\

  %%    {}
  %%  & {HY}
  %%  & \rmUkn {420\,131}
  %%  & \rmUkn {-283}
  %%  & \rmWrs {-7}
  %%  & \rmUkn {-824014}
  %%  & \rmBtr {122\,075}
  %%  & \rmUkn {85.69}
  %%  & \rmBtr {-0.57}
  %%  & \rmUkn {4.5100\,\unit{ }}
  %%  & \rmEql {+2.25}
  %%  & \rmUkn {4\,552\,259}
  %%  & \rmBtr {-1.90}
  %%  & \rmUkn {\formattime{00}{21}{53}}
  %%  & \rmWrs {+1.86}
  %% \\
  \midrule

     {c6}
   & {$L_1$}
   & \rmUkn {590\,060}
   & \rmUkn {-331}
   & {}
   & \rmUkn {-536\,208}
   & {}
   & \rmUkn {85.96}
   & {}
   & \rmBtr {3.6925\,\unit{ }}
   & {}
   & \rmUkn {4\,699\,131}
   & {}
   & \rmUkn {\formattime{00}{35}{08}}
   & {}
  \\

     {}
   & {SL}
   & \rmUkn {590\,060}
   & \rmUkn {-314}
   & \rmBtr {17}
   & \rmUkn {-525\,484}
   & \rmBtr {10\,724}
   & \rmUkn {86.03}
   & \rmEql {0.07}
   & \rmUkn {3.7549\,\unit{ }}
   & \rmEql {+1.69}
   & \rmUkn {4\,678\,851}
   & \rmEql {-0.43}
   & \rmUkn {\formattime{00}{43}{27}}
   & \rmWrs {+23.67}
  \\

     {}
   & {PD}
   & \rmUkn {590\,060}
   & \rmUkn {-311}
   & \rmBtr {20}
   & \rmUkn {-509\,299}
   & \rmBtr {26\,909}
   & \rmUkn {86.02}
   & \rmEql {0.06}
   & \rmUkn {3.7537\,\unit{ }}
   & \rmEql {+1.66}
   & \rmUkn {4\,659\,440}
   & \rmEql {-0.84}
   & \rmUkn {\formattime{00}{39}{15}}
   & \rmWrs {+11.72}
  \\

     {}
   & {CD}
   & \rmUkn {590\,060}
   & \rmBtr {-308}
   & \rmBtr {23}
   & \rmBtr {-507\,841}
   & \rmBtr {28\,367}
   & \rmBtr {85.02}
   & \rmBtr {-0.94}
   & \rmUkn {3.8492\,\unit{ }}
   & \rmEql {+4.24}
   & \rmBtr {4\,613\,520}
   & \rmBtr {-1.82}
   & \rmUkn {\formattime{00}{37}{18}}
   & \rmWrs {+6.17}
  \\

  %%    {}
  %%  & {HY}
  %%  & \rmUkn {590\,060}
  %%  & \rmUkn {-310}
  %%  & \rmBtr {21}
  %%  & \rmUkn {-508237}
  %%  & \rmBtr {27\,971}
  %%  & \rmUkn {85.44}
  %%  & \rmBtr {-0.52}
  %%  & \rmUkn {3.8017\,\unit{ }}
  %%  & \rmEql {+2.96}
  %%  & \rmUkn {4\,630\,588}
  %%  & \rmBtr {-1.46}
  %%  & \rmUkn {\formattime{00}{37}{35}}
  %%  & \rmWrs {+6.97}
  %% \\
  \midrule

     {c7}
   & {$L_1$}
   & \rmUkn {650\,127}
   & \rmBtr {-86}
   & {}
   & \rmUkn {-57\,267}
   & {}
   & \rmUkn {87.72}
   & {}
   & \rmBtr {13.1894\,\unit{ }}
   & {}
   & \rmUkn {7\,817\,717}
   & {}
   & \rmUkn {\formattime{01}{22}{23}}
   & {}
  \\

     {}
   & {SL}
   & \rmUkn {650\,127}
   & \rmUkn {-87}
   & \rmEql {-1}
   & \rmUkn {-57\,903}
   & \rmWrs {-636}
   & \rmUkn {87.69}
   & \rmEql {-0.03}
   & \rmUkn {13.2282\,\unit{ }}
   & \rmEql {+0.29}
   & \rmUkn {7\,800\,550}
   & \rmEql {-0.22}
   & \rmUkn {\formattime{01}{19}{05}}
   & \rmBtr {-4.01}
  \\

     {}
   & {PD}
   & \rmUkn {650\,127}
   & \rmUkn {-90}
   & \rmWrs {-4}
   & \rmUkn {-59\,139}
   & \rmWrs {-1872}
   & \rmUkn {87.61}
   & \rmEql {-0.11}
   & \rmUkn {13.2473\,\unit{ }}
   & \rmEql {+0.44}
   & \rmUkn {7\,790\,466}
   & \rmEql {-0.35}
   & \rmUkn {\formattime{01}{18}{02}}
   & \rmBtr {-5.28}
  \\

     {}
   & {CD}
   & \rmUkn {650\,127}
   & \rmUkn {-88}
   & \rmWrs {-2}
   & \rmBtr {-54\,659}
   & \rmBtr {2\,608}
   & \rmBtr {87.23}
   & \rmEql {-0.49}
   & \rmUkn {13.4631\,\unit{ }}
   & \rmEql {+2.08}
   & \rmBtr {7\,721\,547}
   & \rmBtr {-1.23}
   & \rmUkn {\formattime{01}{38}{58}}
   & \rmWrs {+20.13}
  \\

  %%    {}
  %%  & {HY}
  %%  & \rmUkn {650\,127}
  %%  & \rmUkn {-85}
  %%  & \rmBtr {1}
  %%  & \rmUkn {-56068}
  %%  & \rmBtr {1\,198}
  %%  & \rmUkn {87.64}
  %%  & \rmEql {-0.08}
  %%  & \rmUkn {13.2625\,\unit{ }}
  %%  & \rmEql {+0.55}
  %%  & \rmUkn {7\,769\,077}
  %%  & \rmEql {-0.62}
  %%  & \rmUkn {\formattime{01}{17}{34}}
  %%  & \rmBtr {-5.85}
  %% \\
  \midrule

     {c8}
   & {$L_1$}
   & \rmUkn {941\,271}
   & \rmUkn {-102}
   & {}
   & \rmUkn {-174\,288}
   & {}
   & \rmUkn {91.14}
   & {}
   & \rmBtr {13.5765\,\unit{ }}
   & {}
   & \rmUkn {12\,559\,176}
   & {}
   & \rmUkn {\formattime{02}{00}{44}}
   & {}
  \\

     {}
   & {SL}
   & \rmUkn {941\,271}
   & \rmBtr {-99}
   & \rmBtr {3}
   & \rmUkn {-175\,643}
   & \rmWrs {-1355}
   & \rmUkn {91.05}
   & \rmEql {-0.09}
   & \rmUkn {13.6386\,\unit{ }}
   & \rmEql {+0.46}
   & \rmUkn {12\,519\,244}
   & \rmEql {-0.32}
   & \rmUkn {\formattime{02}{35}{41}}
   & \rmWrs {+28.95}
  \\

     {}
   & {PD}
   & \rmUkn {941\,271}
   & \rmUkn {-104}
   & \rmWrs {-1}
   & \rmBtr {-171\,323}
   & \rmBtr {2\,966}
   & \rmUkn {90.99}
   & \rmEql {-0.15}
   & \rmUkn {13.6962\,\unit{ }}
   & \rmEql {+0.88}
   & \rmUkn {12\,501\,090}
   & \rmEql {-0.46}
   & \rmUkn {\formattime{02}{35}{44}}
   & \rmWrs {+28.99}
  \\

     {}
   & {CD}
   & \rmUkn {941\,271}
   & \rmUkn {-105}
   & \rmWrs {-2}
   & \rmUkn {-174\,831}
   & \rmWrs {-542}
   & \rmBtr {90.78}
   & \rmEql {-0.36}
   & \rmUkn {14.3538\,\unit{ }}
   & \rmWrs {+5.73}
   & \rmBtr {12\,330\,992}
   & \rmBtr {-1.82}
   & \rmUkn {\formattime{02}{33}{09}}
   & \rmWrs {+26.85}
  \\

  %%    {}
  %%  & {HY}
  %%  & \rmUkn {941\,271}
  %%  & \rmUkn {-102}
  %%  & \rmEql {-0}
  %%  & \rmUkn {-169525}
  %%  & \rmBtr {4\,763}
  %%  & \rmUkn {90.88}
  %%  & \rmEql {-0.26}
  %%  & \rmUkn {13.7715\,\unit{ }}
  %%  & \rmEql {+1.44}
  %%  & \rmUkn {12\,468\,254}
  %%  & \rmEql {-0.72}
  %%  & \rmUkn {\formattime{02}{05}{26}}
  %%  & \rmWrs {+3.89}
  %% \\

  % ------

  \midrule
     {all}
   & {$L_1$}
   & {}
   & \rmUkn {-1\,214}
   & {}
   & \rmUkn {-1\,872\,304}
   & {}
   & \rmUkn {88.87}
   & {}
   & \rmBtr {45.1308\,\unit{ }}
   & {}
   & \rmUkn {37\,573\,701}
   & {}
   & \rmUkn {\formattime{05}{21}{26}}
   & {}
  \\

     {}
   & {SL}
   & {}
   & \rmUkn {-1\,204}
   & {}
   & \rmUkn {-1\,763\,692}
   & {}
   & \rmUkn {88.80}
   & \rmEql {-0.07}
   & \rmUkn {45.4180\,\unit{ }}
   & \rmEql {+0.64}
   & \rmUkn {37\,446\,288}
   & \rmEql {-0.34}
   & \rmUkn {\formattime{06}{20}{34}}
   & \rmWrs {+18.40}
  \\

     {}
   & {PD}
   & {}
   & \rmUkn {-1\,188}
   & {}
   & \rmUkn {-1\,663\,564}
   & {}
   & \rmUkn {88.81}
   & \rmEql {-0.06}
   & \rmUkn {45.5253\,\unit{ }}
   & \rmEql {+0.87}
   & \rmUkn {37\,344\,467}
   & \rmEql {-0.61}
   & \rmUkn {\formattime{06}{18}{56}}
   & \rmWrs {+17.89}
  \\

     {}
   & {CD}
   & {}
   & \rmBtr {-1\,181}
   & {}
   & \rmBtr {-1\,633\,880}
   & {}
   & \rmBtr {88.07}
   & \rmBtr {-0.80}
   & \rmUkn {46.8791\,\unit{ }}
   & \rmEql {+3.87}
   & \rmBtr {36\,909\,687}
   & \rmBtr {-1.77}
   & \rmUkn {\formattime{06}{18}{24}}
   & \rmWrs {+17.72}
  \\

  %%    {}
  %%  & {HY}
  %%  & {}
  %%  & \rmUkn {-1200}
  %%  & {}
  %%  & \rmBtr {-1679466}
  %%  & {}
  %%  & \rmBtr {88.40}
  %%  & \rmEql {-0.47}
  %%  & \rmUkn {45.7558\,\unit{ }}
  %%  & \rmEql {+1.38}
  %%  & \rmUkn {37\,202\,679}
  %%  & \rmEql {-0.99}
  %%  & \rmUkn {\formattime{05}{26}{41}}
  %%  & \rmWrs {+1.63}
  %% \\

  \bottomrule
\end{tabular}

\end{table}

Table~\ref{tab:results-with-nodedelay} shows results with bifurcation penalties.
Note that the bifurcation penalties increase the delays and thus decrease slacks.
The wire length and via count decrease compared to $d_{bif}=0$ because the delay prices now weigh stronger compared to congestion.
Again the new CD methods outperforms the others in terms of WS, TNS and via count, but
has a slightly larger wire length.
\begin{table}[tb]
  \caption{Timing-constrained global routing results with $d_{bif} > 0$.}
  \label{tab:results-with-nodedelay}
  \scriptsize
  \addtolength{\tabcolsep}{-0.2em}
\begin{tabular}{ llHrHrHrHrHrHrH }
  \toprule
Chip
  & {Run}
  & {Nets}
  & WS &
  & TNS&
  & ACE4 &
  & WL &
  & Vias &
  & Walltime &
  \\
  & 
  & {Nets}
  & [ps] &
  & [ps] &
  & [\%] &
  & [m] &
  & [\#] &
  & [h:m:s] &
  \\

  \midrule

  % ------

     {c1}
   & {$L_1$}
   & \rmUkn {49\,734}
   & \rmUkn {-84}
   & {}
   & \rmUkn {-9\,837}
   & {}
   & \rmUkn {92.15}
   & {}
   & \rmBtr {0.5880\,\unit{ }}
   & {}
   & \rmUkn {536\,492}
   & {}
   & \rmUkn {\formattime{00}{03}{14}}
   & {}
  \\

     {}
   & {SL}
   & \rmUkn {49734}
   & \rmUkn {-64}
   & \rmBtr {20}
   & \rmUkn {-6\,325}
   & \rmBtr {3512}
   & \rmUkn {92.33}
   & \rmEql {0.18}
   & \rmUkn {0.6039\,\unit{ }}
   & \rmEql {+2.70}
   & \rmUkn {536\,398}
   & \rmEql {-0.02}
   & \rmUkn {\formattime{00}{03}{11}}
   & \rmBtr {-1.55}
  \\

     {}
   & {PD}
   & \rmUkn {49734}
   & \rmUkn {-64}
   & \rmBtr {20}
   & \rmBtr {-5\,937}
   & \rmBtr {3900}
   & \rmUkn {92.16}
   & \rmEql {0.01}
   & \rmUkn {0.6051\,\unit{ }}
   & \rmEql {+2.91}
   & \rmUkn {530\,699}
   & \rmBtr {-1.08}
   & \rmUkn {\formattime{00}{03}{34}}
   & \rmWrs {+10.31}
  \\

     {}
   & {CD}
   & \rmUkn {49734}
   & \rmBtr {-61}
   & \rmBtr {23}
   & \rmUkn {-6\,333}
   & \rmBtr {3504}
   & \rmBtr {91.93}
   & \rmEql {-0.22}
   & \rmUkn {0.6109\,\unit{ }}
   & \rmEql {+3.89}
   & \rmBtr {526\,427}
   & \rmBtr {-1.88}
   & \rmUkn {\formattime{00}{03}{09}}
   & \rmBtr {-2.58}
  \\

  %%    {}
  %%  & {HY}
  %%  & \rmUkn {49734}
  %%  & \rmUkn {-64}
  %%  & \rmBtr {20}
  %%  & \rmUkn {-5932}
  %%  & \rmBtr {3905}
  %%  & \rmUkn {92.03}
  %%  & \rmEql {-0.12}
  %%  & \rmUkn {0.6109\,\unit{ }}
  %%  & \rmEql {+3.89}
  %%  & \rmUkn {526701}
  %%  & \rmBtr {-1.83}
  %%  & \rmUkn {\formattime{00}{03}{28}}
  %%  & \rmWrs {+7.22}
  %% \\
  \midrule

     {c2}
   & {$L_1$}
   & \rmUkn {66500}
   & \rmUkn {-105}
   & {}
   & \rmUkn {-28\,844}
   & {}
   & \rmUkn {87.76}
   & {}
   & \rmBtr {1.0728\,\unit{ }}
   & {}
   & \rmUkn {759\,747}
   & {}
   & \rmUkn {\formattime{00}{03}{41}}
   & {}
  \\

     {}
   & {SL}
   & \rmUkn {66500}
   & \rmUkn {-93}
   & \rmBtr {13}
   & \rmUkn {-22\,320}
   & \rmBtr {6524}
   & \rmBtr {85.91}
   & \rmBtr {-1.85}
   & \rmUkn {1.1032\,\unit{ }}
   & \rmEql {+2.83}
   & \rmUkn {746\,116}
   & \rmBtr {-1.79}
   & \rmUkn {\formattime{00}{03}{55}}
   & \rmWrs {+6.33}
  \\

     {}
   & {PD}
   & \rmUkn {66500}
   & \rmBtr {-88}
   & \rmBtr {17}
   & \rmUkn {-22\,102}
   & \rmBtr {6742}
   & \rmUkn {86.56}
   & \rmBtr {-1.20}
   & \rmUkn {1.0937\,\unit{ }}
   & \rmEql {+1.95}
   & \rmUkn {733\,950}
   & \rmBtr {-3.40}
   & \rmUkn {\formattime{00}{03}{31}}
   & \rmBtr {-4.52}
  \\

     {}
   & {CD}
   & \rmUkn {66500}
   & \rmUkn {-97}
   & \rmBtr {8}
   & \rmBtr {-22\,022}
   & \rmBtr {6822}
   & \rmUkn {86.07}
   & \rmBtr {-1.69}
   & \rmUkn {1.1030\,\unit{ }}
   & \rmEql {+2.82}
   & \rmBtr {728\,092}
   & \rmBtr {-4.17}
   & \rmUkn {\formattime{00}{03}{50}}
   & \rmWrs {+4.07}
  \\

  %%    {}
  %%  & {HY}
  %%  & \rmUkn {66500}
  %%  & \rmUkn {-88}
  %%  & \rmBtr {17}
  %%  & \rmUkn {-22090}
  %%  & \rmBtr {6755}
  %%  & \rmUkn {86.27}
  %%  & \rmBtr {-1.49}
  %%  & \rmUkn {1.0987\,\unit{ }}
  %%  & \rmEql {+2.41}
  %%  & \rmUkn {728939}
  %%  & \rmBtr {-4.06}
  %%  & \rmUkn {\formattime{00}{03}{54}}
  %%  & \rmWrs {+5.88}
  %% \\
  \midrule

     {c3}
   & {$L_1$}
   & \rmUkn {286619}
   & \rmUkn {-450}
   & {}
   & \rmUkn {-2\,751\,437}
   & {}
   & \rmUkn {92.33}
   & {}
   & \rmBtr {1.9731\,\unit{ }}
   & {}
   & \rmUkn {2\,131\,502}
   & {}
   & \rmUkn {\formattime{00}{17}{27}}
   & {}
  \\

     {}
   & {SL}
   & \rmUkn {286619}
   & \rmUkn {-315}
   & \rmBtr {135}
   & \rmUkn {-2\,129\,700}
   & \rmBtr {621738}
   & \rmUkn {92.77}
   & \rmEql {0.44}
   & \rmUkn {2.0701\,\unit{ }}
   & \rmEql {+4.92}
   & \rmUkn {2\,107\,934}
   & \rmBtr {-1.11}
   & \rmUkn {\formattime{00}{20}{58}}
   & \rmWrs {+20.15}
  \\

     {}
   & {PD}
   & \rmUkn {286619}
   & \rmUkn {-315}
   & \rmBtr {135}
   & \rmBtr {-1\,831\,312}
   & \rmBtr {920126}
   & \rmUkn {92.81}
   & \rmEql {0.48}
   & \rmUkn {2.0624\,\unit{ }}
   & \rmEql {+4.53}
   & \rmUkn {2\,094\,655}
   & \rmBtr {-1.73}
   & \rmUkn {\formattime{00}{21}{02}}
   & \rmWrs {+20.53}
  \\

     {}
   & {CD}
   & \rmUkn {286619}
   & \rmBtr {-304}
   & \rmBtr {145}
   & \rmUkn {-1\,864\,826}
   & \rmBtr {886611}
   & \rmBtr {91.49}
   & \rmBtr {-0.84}
   & \rmUkn {2.0496\,\unit{ }}
   & \rmEql {+3.88}
   & \rmBtr {2\,085\,304}
   & \rmBtr {-2.17}
   & \rmUkn {\formattime{00}{17}{30}}
   & \rmEql {+0.29}
  \\

  %%    {}
  %%  & {HY}
  %%  & \rmUkn {286619}
  %%  & \rmUkn {-315}
  %%  & \rmBtr {135}
  %%  & \rmUkn {-1817806}
  %%  & \rmBtr {933631}
  %%  & \rmUkn {92.41}
  %%  & \rmEql {0.08}
  %%  & \rmUkn {2.0578\,\unit{ }}
  %%  & \rmEql {+4.29}
  %%  & \rmUkn {2086664}
  %%  & \rmBtr {-2.10}
  %%  & \rmUkn {\formattime{00}{17}{18}}
  %%  & \rmEql {-0.86}
  %% \\
  \midrule

     {c4}
   & {$L_1$}
   & \rmUkn {305094}
   & \rmUkn {-84}
   & {}
   & \rmUkn {-147\,690}
   & {}
   & \rmUkn {90.95}
   & {}
   & \rmBtr {6.5049\,\unit{ }}
   & {}
   & \rmUkn {3\,645\,350}
   & {}
   & \rmUkn {\formattime{00}{31}{02}}
   & {}
  \\

     {}
   & {SL}
   & \rmUkn {305094}
   & \rmUkn {-75}
   & \rmBtr {9}
   & \rmUkn {-135\,682}
   & \rmBtr {12008}
   & \rmUkn {90.81}
   & \rmEql {-0.14}
   & \rmUkn {6.5351\,\unit{ }}
   & \rmEql {+0.46}
   & \rmUkn {3\,612\,175}
   & \rmEql {-0.91}
   & \rmUkn {\formattime{00}{44}{12}}
   & \rmWrs {+42.43}
  \\

     {}
   & {PD}
   & \rmUkn {305094}
   & \rmUkn {-65}
   & \rmBtr {19}
   & \rmUkn {-101\,879}
   & \rmBtr {45811}
   & \rmUkn {90.73}
   & \rmEql {-0.22}
   & \rmUkn {6.6586\,\unit{ }}
   & \rmEql {+2.36}
   & \rmUkn {3\,533\,707}
   & \rmBtr {-3.06}
   & \rmUkn {\formattime{00}{34}{26}}
   & \rmWrs {+10.96}
  \\

     {}
   & {CD}
   & \rmUkn {305094}
   & \rmBtr {-64}
   & \rmBtr {20}
   & \rmBtr {-91\,618}
   & \rmBtr {56072}
   & \rmBtr {90.62}
   & \rmEql {-0.33}
   & \rmUkn {6.7194\,\unit{ }}
   & \rmEql {+3.30}
   & \rmBtr {3\,518\,335}
   & \rmBtr {-3.48}
   & \rmUkn {\formattime{00}{32}{09}}
   & \rmWrs {+3.60}
  \\

  %%    {}
  %%  & {HY}
  %%  & \rmUkn {305094}
  %%  & \rmUkn {-67}
  %%  & \rmBtr {17}
  %%  & \rmUkn {-96050}
  %%  & \rmBtr {51641}
  %%  & \rmUkn {90.75}
  %%  & \rmEql {-0.20}
  %%  & \rmUkn {6.6406\,\unit{ }}
  %%  & \rmEql {+2.09}
  %%  & \rmUkn {3538297}
  %%  & \rmBtr {-2.94}
  %%  & \rmUkn {\formattime{00}{32}{42}}
  %%  & \rmWrs {+5.37}
  %% \\
  \midrule

     {c5}
   & {$L_1$}
   & \rmUkn {420131}
   & \rmUkn {-533}
   & {}
   & \rmUkn {-3\,654\,628}
   & {}
   & \rmUkn {87.49}
   & {}
   & \rmBtr {4.3901\,\unit{ }}
   & {}
   & \rmUkn {4\,300\,895}
   & {}
   & \rmUkn {\formattime{00}{23}{04}}
   & {}
  \\

     {}
   & {SL}
   & \rmUkn {420131}
   & \rmBtr {-404}
   & \rmBtr {129}
   & \rmUkn {-2\,358\,983}
   & \rmBtr {1295645}
   & \rmUkn {87.03}
   & \rmEql {-0.46}
   & \rmUkn {4.5889\,\unit{ }}
   & \rmEql {+4.53}
   & \rmUkn {4\,170\,442}
   & \rmBtr {-3.03}
   & \rmUkn {\formattime{00}{32}{23}}
   & \rmWrs {+40.39}
  \\

     {}
   & {PD}
   & \rmUkn {420131}
   & \rmBtr {-404}
   & \rmBtr {129}
   & \rmUkn {-2\,177\,493}
   & \rmBtr {1477136}
   & \rmUkn {86.75}
   & \rmBtr {-0.74}
   & \rmUkn {4.5325\,\unit{ }}
   & \rmEql {+3.24}
   & \rmUkn {4\,143\,083}
   & \rmBtr {-3.67}
   & \rmUkn {\formattime{00}{25}{34}}
   & \rmWrs {+10.84}
  \\

     {}
   & {CD}
   & \rmUkn {420131}
   & \rmUkn {-415}
   & \rmBtr {118}
   & \rmBtr {-2\,141\,300}
   & \rmBtr {1513328}
   & \rmBtr {86.48}
   & \rmBtr {-1.01}
   & \rmUkn {4.5952\,\unit{ }}
   & \rmEql {+4.67}
   & \rmBtr {4\,093\,042}
   & \rmBtr {-4.83}
   & \rmUkn {\formattime{00}{24}{21}}
   & \rmWrs {+5.56}
  \\

  %%    {}
  %%  & {HY}
  %%  & \rmUkn {420131}
  %%  & \rmUkn {-397}
  %%  & \rmBtr {137}
  %%  & \rmUkn {-2159204}
  %%  & \rmBtr {1495424}
  %%  & \rmUkn {86.53}
  %%  & \rmBtr {-0.96}
  %%  & \rmUkn {4.5549\,\unit{ }}
  %%  & \rmEql {+3.75}
  %%  & \rmUkn {4089341}
  %%  & \rmBtr {-4.92}
  %%  & \rmUkn {\formattime{00}{23}{26}}
  %%  & \rmWrs {+1.59}
  %% \\
  \midrule

     {c6}
   & {$L_1$}
   & \rmUkn {590060}
   & \rmUkn {-756}
   & {}
   & \rmUkn {-3\,978\,380}
   & {}
   & \rmUkn {88.12}
   & {}
   & \rmBtr {3.6710\,\unit{ }}
   & {}
   & \rmUkn {4\,348\,194}
   & {}
   & \rmUkn {\formattime{00}{35}{49}}
   & {}
  \\

     {}
   & {SL}
   & \rmUkn {590060}
   & \rmUkn {-722}
   & \rmBtr {34}
   & \rmUkn {-3\,160\,751}
   & \rmBtr {817629}
   & \rmUkn {88.94}
   & \rmWrs {0.82}
   & \rmUkn {3.8832\,\unit{ }}
   & \rmWrs {+5.78}
   & \rmUkn {4\,237\,414}
   & \rmBtr {-2.55}
   & \rmUkn {\formattime{00}{46}{15}}
   & \rmWrs {+29.13}
  \\

     {}
   & {PD}
   & \rmUkn {590060}
   & \rmUkn {-698}
   & \rmBtr {58}
   & \rmUkn {-2\,750\,994}
   & \rmBtr {1227386}
   & \rmUkn {88.54}
   & \rmEql {0.42}
   & \rmUkn {3.8960\,\unit{ }}
   & \rmWrs {+6.13}
   & \rmUkn {4\,185\,778}
   & \rmBtr {-3.74}
   & \rmUkn {\formattime{00}{44}{29}}
   & \rmWrs {+24.20}
  \\

     {}
   & {CD}
   & \rmUkn {590060}
   & \rmBtr {-685}
   & \rmBtr {71}
   & \rmBtr {-2\,703\,373}
   & \rmBtr {1275006}
   & \rmBtr {87.17}
   & \rmBtr {-0.95}
   & \rmUkn {3.8637\,\unit{ }}
   & \rmWrs {+5.25}
   & \rmBtr {4\,169\,492}
   & \rmBtr {-4.11}
   & \rmUkn {\formattime{00}{39}{26}}
   & \rmWrs {+10.10}
  \\

  %%    {}
  %%  & {HY}
  %%  & \rmUkn {590060}
  %%  & \rmUkn {-694}
  %%  & \rmBtr {62}
  %%  & \rmUkn {-2744008}
  %%  & \rmBtr {1234372}
  %%  & \rmUkn {87.24}
  %%  & \rmBtr {-0.88}
  %%  & \rmUkn {3.8947\,\unit{ }}
  %%  & \rmWrs {+6.09}
  %%  & \rmUkn {4160842}
  %%  & \rmBtr {-4.31}
  %%  & \rmUkn {\formattime{00}{36}{44}}
  %%  & \rmWrs {+2.56}
  %% \\
  \midrule

     {c7}
   & {$L_1$}
   & \rmUkn {650127}
   & \rmUkn {-144}
   & {}
   & \rmUkn {-339\,184}
   & {}
   & \rmUkn {88.80}
   & {}
   & \rmBtr {13.0465\,\unit{ }}
   & {}
   & \rmUkn {6\,779\,085}
   & {}
   & \rmUkn {\formattime{01}{14}{58}}
   & {}
  \\

     {}
   & {SL}
   & \rmUkn {650127}
   & \rmUkn {-123}
   & \rmBtr {20}
   & \rmUkn {-267\,361}
   & \rmBtr {71823}
   & \rmUkn {88.92}
   & \rmEql {0.12}
   & \rmUkn {13.0961\,\unit{ }}
   & \rmEql {+0.38}
   & \rmUkn {6\,740\,822}
   & \rmEql {-0.56}
   & \rmUkn {\formattime{01}{10}{15}}
   & \rmBtr {-6.29}
  \\

     {}
   & {PD}
   & \rmUkn {650127}
   & \rmBtr {-119}
   & \rmBtr {24}
   & \rmUkn {-217\,997}
   & \rmBtr {121187}
   & \rmBtr {88.72}
   & \rmEql {-0.08}
   & \rmUkn {13.2082\,\unit{ }}
   & \rmEql {+1.24}
   & \rmUkn {6\,653\,426}
   & \rmBtr {-1.85}
   & \rmUkn {\formattime{01}{11}{42}}
   & \rmBtr {-4.36}
  \\

     {}
   & {CD}
   & \rmUkn {650127}
   & \rmUkn {-120}
   & \rmBtr {24}
   & \rmBtr {-194\,336}
   & \rmBtr {144848}
   & \rmUkn {89.12}
   & \rmEql {0.32}
   & \rmUkn {13.2902\,\unit{ }}
   & \rmEql {+1.87}
   & \rmBtr {6\,647\,930}
   & \rmBtr {-1.93}
   & \rmUkn {\formattime{01}{34}{00}}
   & \rmWrs {+25.39}
  \\

  %%    {}
  %%  & {HY}
  %%  & \rmUkn {650127}
  %%  & \rmUkn {-129}
  %%  & \rmBtr {15}
  %%  & \rmUkn {-229623}
  %%  & \rmBtr {109561}
  %%  & \rmUkn {88.65}
  %%  & \rmEql {-0.15}
  %%  & \rmUkn {13.1863\,\unit{ }}
  %%  & \rmEql {+1.07}
  %%  & \rmUkn {6636111}
  %%  & \rmBtr {-2.11}
  %%  & \rmUkn {\formattime{01}{10}{04}}
  %%  & \rmBtr {-6.54}
  %% \\
  \midrule

     {c8}
   & {$L_1$}
   & \rmUkn {941271}
   & \rmUkn {-197}
   & {}
   & \rmUkn {-1\,093\,449}
   & {}
   & \rmUkn {91.46}
   & {}
   & \rmBtr {13.4100\,\unit{ }}
   & {}
   & \rmUkn {11\,450\,165}
   & {}
   & \rmUkn {\formattime{02}{11}{43}}
   & {}
  \\

     {}
   & {SL}
   & \rmUkn {941271}
   & \rmUkn {-154}
   & \rmBtr {43}
   & \rmUkn {-985\,353}
   & \rmBtr {108096}
   & \rmUkn {91.59}
   & \rmEql {0.13}
   & \rmUkn {13.5111\,\unit{ }}
   & \rmEql {+0.75}
   & \rmUkn {11\,405\,990}
   & \rmEql {-0.39}
   & \rmUkn {\formattime{02}{54}{40}}
   & \rmWrs {+32.61}
  \\

     {}
   & {PD}
   & \rmUkn {941271}
   & \rmBtr {-114}
   & \rmBtr {83}
   & \rmUkn {-804\,198}
   & \rmBtr {289251}
   & \rmUkn {91.15}
   & \rmEql {-0.31}
   & \rmUkn {13.8756\,\unit{ }}
   & \rmEql {+3.47}
   & \rmUkn {11\,196\,187}
   & \rmBtr {-2.22}
   & \rmUkn {\formattime{02}{44}{06}}
   & \rmWrs {+24.59}
  \\

     {}
   & {CD}
   & \rmUkn {941271}
   & \rmUkn {-115}
   & \rmBtr {82}
   & \rmBtr {-804\,080}
   & \rmBtr {289369}
   & \rmBtr {91.14}
   & \rmEql {-0.32}
   & \rmUkn {14.1136\,\unit{ }}
   & \rmWrs {+5.25}
   & \rmBtr {11\,138\,570}
   & \rmBtr {-2.72}
   & \rmUkn {\formattime{02}{31}{19}}
   & \rmWrs {+14.88}
  \\

  %%    {}
  %%  & {HY}
  %%  & \rmUkn {941271}
  %%  & \rmUkn {-131}
  %%  & \rmBtr {66}
  %%  & \rmUkn {-836637}
  %%  & \rmBtr {256812}
  %%  & \rmUkn {90.98}
  %%  & \rmEql {-0.48}
  %%  & \rmUkn {13.8516\,\unit{ }}
  %%  & \rmEql {+3.29}
  %%  & \rmUkn {11181281}
  %%  & \rmBtr {-2.35}
  %%  & \rmUkn {\formattime{02}{16}{30}}
  %%  & \rmWrs {+3.63}
  %% \\

  % ------

  \midrule
     {all}
   & {$L_1$}
   & {}
   & \rmUkn {-2\,353}
   & {}
   & \rmUkn {-12\,003\,449}
   & {}
   & \rmUkn {89.88}
   & {}
   & \rmBtr {44.6564\,\unit{ }}
   & {}
   & \rmUkn {33\,951\,430}
   & {}
   & \rmUkn {\formattime{05}{20}{58}}
   & {}
  \\

     {}
   & {SL}
   & {}
   & \rmUkn {-1\,949}
   & {}
   & \rmUkn {-9\,066\,474}
   & {}
   & \rmUkn {89.79}
   & \rmEql {-0.10}
   & \rmUkn {45.3916\,\unit{ }}
   & \rmEql {+1.65}
   & \rmUkn {33\,557\,291}
   & \rmBtr {-1.16}
   & \rmUkn {\formattime{06}{35}{49}}
   & \rmWrs {+23.32}
  \\

     {}
   & {PD}
   & {}
   & \rmUkn {-1\,867}
   & {}
   & \rmUkn {-7\,911\,912}
   & {}
   & \rmUkn {89.68}
   & \rmEql {-0.20}
   & \rmUkn {45.9321\,\unit{ }}
   & \rmEql {+2.86}
   & \rmUkn {33\,071\,485}
   & \rmBtr {-2.59}
   & \rmUkn {\formattime{06}{08}{24}}
   & \rmWrs {+14.78}
  \\

     {}
   & {CD}
   & {}
   & \rmBtr {-1\,861}
   & {}
   & \rmBtr {-7\,827\,888}
   & {}
   & \rmBtr {89.25}
   & \rmBtr {-0.63}
   & \rmUkn {46.3456\,\unit{ }}
   & \rmEql {+3.78}
   & \rmBtr {32\,907\,192}
   & \rmBtr {-3.08}
   & \rmUkn {\formattime{06}{05}{44}}
   & \rmWrs {+13.95}
  \\

  %%    {}
  %%  & {HY}
  %%  & {}
  %%  & \rmBtr {-1885}
  %%  & {}
  %%  & \rmBtr {-7911349}
  %%  & {}
  %%  & \rmBtr {89.36}
  %%  & \rmBtr {-0.52}
  %%  & \rmUkn {45.8955\,\unit{ }}
  %%  & \rmEql {+2.77}
  %%  & \rmBtr {32948176}
  %%  & \rmBtr {-2.95}
  %%  & \rmUkn {\formattime{05}{24}{06}}
  %%  & \rmEql {+0.98}
  %% \\

  \bottomrule
\end{tabular}

\end{table}

\section{Conclusion}
We presented a new and faster approximation algorithm for the cost-distance Steiner tree problem, achieving the best known
theoretical approximation factor of $\mathcal{O}(\log t)$.
It arises as a sub-problem in timing-constrained global routing with a linear delay model (before buffering).
We adopted it for better practical performance.
Our experimental results show improvements over state-of-the-art algorithms on large instances and in presence of bifurcation penalties.
It leads to the overall best timing and congestion results in timing-constrained global routing.

\section*{Acknowledgements}
We thank Paula Heinz who implemented a preliminary version based on \cite{meyerson-etal:08}.
We reused parts her code.

\end{document}